\documentclass[11pt]{article}
\usepackage[margin=0.85in]{geometry}
\usepackage{amsmath,amssymb,amsfonts,amsthm,mathtools}
\usepackage{booktabs,longtable,array,ragged2e}
\usepackage{enumitem}
\usepackage{xcolor}
\usepackage{multirow}
\usepackage{parskip}
\usepackage{microtype,bm}
\usepackage{pdflscape}
\usepackage{float}
\usepackage{tabularx}
\usepackage{caption}
\usepackage{authblk}
\usepackage{eso-pic}
\usepackage{pdflscape}
\usepackage[hidelinks]{hyperref}
\usepackage[T1]{fontenc}
\usepackage[utf8]{inputenc}
\usepackage{parskip}
\usepackage{cancel}
\usepackage{comment}
\usepackage{cite}
\usepackage{empheq}
\usepackage[normalem]{ulem}
\makeatletter
\g@addto@macro\UrlBigBreaks{\do\/}
\makeatother

\usepackage[table]{xcolor}
\usepackage{lmodern}

\usepackage{mathrsfs}

\renewcommand{\arraystretch}{1.15}
\newcolumntype{L}[1]{>{\RaggedRight\arraybackslash}p{#1}}
\newcolumntype{C}[1]{>{\Centering\arraybackslash}p{#1}}

\hypersetup{colorlinks=true,linkcolor=blue!60!black,citecolor=blue!60!black,urlcolor=blue!60!black}

\definecolor{darkred}{rgb}{0.65,0.15,0}
\hypersetup{pdfborder={0 0 0},colorlinks=true,urlcolor=blue,citecolor=blue,linkcolor=darkred,linktocpage=true}

\newcommand{\tr}{\operatorname{tr}}

\newcommand{\three}{\mathbf 3}
\newcommand{\thbar}{\overline{\mathbf 3}}
\newcommand{\six}{\mathbf 6}
\newcommand{\sixbar}{\overline{\mathbf 6}}
\newcommand{\eight}{\mathbf 8}

\newcommand{\eq}[1]{(\ref{#1})}
\newcommand{\w}[1]{\\[0.#1cm]}
\def\be{\begin{equation}}
\def\ee{\end{equation}}
\def\bea{\begin{eqnarray}}
\def\eea{\end{eqnarray}}
\def\nn{\nonumber}
\def\m{\mu}
\def\n{\nu}

\newcommand{\trtwo}{\operatorname{tr}_{\mathbf 2}}
\newcommand{\trfour}{\operatorname{tr}_{\mathbf 4}}

\newcommand{\trH}{\operatorname{tr}_{H}}
\newcommand{\Sp}{\operatorname{Sp}}

\newcommand{\one}{\mathbf 1}

\newtheorem{lemma}{Lemma}

\newcommand{\bA}{b_{A_2}}
\newcommand{\bU}{b_{U(1)}}
\numberwithin{equation}{section}

\title{Anomaly-free six-dimensional \(Sp(1)_R\)-gauged supergravities}
\author{ Katrin Becker, Ergin Sezgin and Qi You 
\\
George P. and Cynthia W. Mitchell Institute \\
for Fundamental Physics and Astronomy \\
Texas A\&M University, College Station, TX 77843-4242, USA}

\date{}

\begin{document}

\AddToShipoutPictureFG*{%
  \AtPageUpperLeft{%
    \put(\LenToUnit{\dimexpr\paperwidth-1in\relax},
         \LenToUnit{-0.55in}){%
      \makebox(0,0)[rt]{%
        \begin{tabular}{r}
        MI-HET-889
        \end{tabular}%
      }%
    }%
  }%
}

\thispagestyle{empty}

\maketitle

\begin{abstract} 

We study anomaly constraints in six-dimensional \(N=(1,0)\) supergravity with gauged \(Sp(1)_R\) symmetry, one tensor multiplet, \(n_V\) vector multiplets and \(n_H\) hypermultiplets. Imposing cancellation of local gravitational, gauge and mixed anomalies, together with the embedding of the anomaly coefficients in a unimodular rank-two string charge lattice, the number of vector multiplets is sharply constrained: the only allowed positive values are $n_V=12$ and $n_V=96$. For a neutral abelian factor, or drone, the anomaly equations force the corresponding row and column of the abelian gauge-kinetic matrix to vanish. Such a factor therefore cannot be retained as an ordinary propagating vector multiplet. For $n_V=96$, within the class of semisimple non-\(R\) gauge algebras containing no \(A_1\) or \(A_2\) factor, the exhaustive spectra-assembly calculation yields no drone-free spectrum.
For \(n_V=12\), the semisimple non-\(R\) branch is necessarily \(A_1^3\). In the doublet/adjoint truncation described in the text, we find $4{,}337{,}331$ consistent models, classified by $102$ anomaly-vector classes. We also consider the reductive branch $A_2\times U(1)\times Sp(1)_R$. For \(A_2\) representations in $\{\mathbf 1,\mathbf 3,\overline{\mathbf 3}, \mathbf 6,\overline{\mathbf 6},\mathbf 8\}$ and primitive \(U(1)\) charge assignments with \(|q|\le 2\), we find \(178{,}538\) solutions, classified by $21$ sets of anomaly vectors.

\end{abstract}

\newpage

\tableofcontents

\section{Introduction}

Six-dimensional \(N=(1,0)\) supergravity provides a useful laboratory for studying the interplay between anomaly cancellation, supersymmetry, and quantum consistency conditions in theories with gravity.  The anomaly polynomial of chiral six-dimensional theories is highly constrained, and the Green--Schwarz mechanism ties local anomaly factorization to the geometry of the string charge lattice.  In particular, for theories with one tensor multiplet, the anomaly coefficients must be embedded in a rank-two lattice of signature \((1,1)\), which is expected to be unimodular.  This gives a powerful arithmetic refinement of the usual local anomaly-cancellation conditions.

In this paper we focus on six-dimensional \(N=(1,0)\) models in which the \(R\)-symmetry group \(Sp(1)_R\) is gauged.  Such theories are natural from the point of view of gauged supergravity and provide an interesting arena for Swampland-type questions, since consistency requires the simultaneous cancellation of gravitational, gauge, mixed gauge-gravitational, and \(R\)-symmetry anomalies.  While \(U(1)_R\)-gauged six-dimensional models have been studied in a number of contexts \cite{Randjbar-Daemi:1985tdc,Avramis:2005qt,Avramis:2005hc,Becker:2023zyb,Becker:2025xgy,Suzuki:2005vu,Becker:2026etl,Guo:2026dvy}, the corresponding \(Sp(1)_R\)-gauged models have received comparatively less systematic attention \cite{Avramis:2005hc,Suzuki:2005vu}.

Previous analyses of \(Sp(1)_R\)-gauged models include limited anomaly searches and constraints on the number of vector multiplets.  One subtlety, which is consequential for the classification problem, is the normalization of the \(Sp(1)_R\) contribution to the anomaly polynomial.  We revisit this point carefully and impose the independent quartic gauge-anomaly constraints, the rank condition required for Green--Schwarz factorization, hypermultiplet multiplicities satisfying the representation-reality conditions, and embedding of the resulting anomaly coefficients into a unimodular rank-two lattice.
A striking consequence of the unimodularity condition applied to the pair \((a,b_R)\) is that the number of vector multiplets is not arbitrary.  In the normalization used here, one obtains only two positive possibilities, $n_V=12$ and $n_V=96$ \cite{Becker:2025xgy}. This is much more restrictive than the familiar congruence condition \cite{Becker:2025xgy} appearing in \(U(1)_R\)-gauged models. This remarkable restriction on the dimension of the full gauge group is one of the main motivations for this work, in which we explore the extent to which such locally and globally anomaly-free models can be classified. 

Before conducting our search for anomaly-free $Sp(1)_R$ gauged models, we first examine whether the gauge group can contain a \(U(1)\) factor under which all matter is neutral, which we call a drone. In line with the obstruction discussed in \cite{Park:2011wv}, we show that, for one tensor multiplet, the abelian anomaly equations imply \(b_{DD}=b_{DI}=0\) for every abelian index \(I\), where \(D\) labels the drone. The \(D\)-th row and column of the gauge-kinetic matrix \(j\cdot b_{IJ}\) therefore vanish, so the matrix cannot be positive definite. We allow drones as auxiliary factors in the spectra-assembly procedure but exclude them from the final physical models.

We first analyze the \(n_V=96\) branch. In order to keep the classification finite and computationally manageable, we adopt the simplifying assumption that the non-\(R\) gauge algebra contains no $U(1), A_1, A_2$ factors. This is the only structural exclusion imposed on the non-\(R\) simple factors in the \(n_V=96\) scan. Using a method we refer to as the spectra-assembly method, which has similarities to various graphical methods that have been employed previously, we find that the solution space is empty. As a by-product, we find 4,094 formal anomaly- and lattice-consistent spectra; every one of these spectra, however, necessarily contains one or more drones. We therefore rule them out as physical models because the corresponding abelian gauge-kinetic matrix is degenerate. 

Next, we turn to the \(n_V=12\) branch.  If the non-\(R\) gauge group is required to be semisimple, the dimension condition forces $ G_{{\rm non\mbox{-}}R}=A_1^3$. 
In the doublet/adjoint truncation, we allow unrestricted multiplicities of the permitted blocks, subject only to the requirement that an individual product representation contain a triplet under at most one $A_1$ factor. We find $4{,}337{,}331$ consistent models, classified by $102$ sets of anomaly vectors. We provide a representative spectrum from each class.  We also study the reductive branch $A_2\times U(1)\times Sp(1)_R$, 
where the classification becomes more intricate because one must also enumerate \(U(1)\) charge assignments.  For \(A_2\) representations in
$
\{\mathbf 1,\mathbf 3,\overline{\mathbf 3},\mathbf 6,\overline{\mathbf 6},\mathbf 8\}
$
and primitive charge assignments with \(|q|\le2\), we find \(178{,}538\) solutions grouped into $21$ classes.

The paper is organized as follows. In section~\ref{sec:anomaly-polynomial} we review the anomaly polynomial for six-dimensional \(N=(1,0)\) supergravity with gauged \(Sp(1)_R\) symmetry. In section~\ref{sec: consistency conditions} we summarize the consistency conditions including the factorization of the anomaly polynomial, the embedding of all anomaly vectors in a unimodular rank-two lattice of signature $(1,1)$, the requirement of nonvanishing and ghost-free gauge kinetic terms. After describing the strategy for the search for consistent models in section~\ref{sec: a search strategy}, we treat the $n_V=96$ branch in section~\ref{sec: nV=96 branch} and the $n_V=12$ branch in sections~\ref{sec:nV12} and~\ref{sec:a2u1}. We conclude in section~\ref{sec:conclusions} with a summary of the main results and some
open questions.  The appendices summarize our hypermultiplet multiplicity conventions, illustrate the spectra-assembly method, and tabulate the \(n_V=12\) results. The full catalogue of formal anomaly- and lattice-consistent spectra for the \(n_V=96\) branch is supplied as ancillary material \cite{github}.

\section{The anomaly polynomial}
\label{sec:anomaly-polynomial}

We shall study the anomalies in $Sp(1)_R$ gauged $N=(1,0)$ supergravity in six dimensions coupled to $n_T=1$ tensor, $n_V$ Yang-Mills and $n_H$ hypermultiplets. In the conventions of \cite{Suzuki:2005vu} which we shall use, the model contains the multiplets
\be
\underbrace{ \left(e_\m^m, \psi_\m^{A-}, B_{\m\n}^{-} \right)}_{\rm graviton}\ ,\qquad \underbrace{\left(B_{\m\n}^{+}, \chi^{A+},\varphi  \right)}_{\rm tensor}\ ,\qquad \underbrace{\left(A_\m, \lambda^{A-} \right)}_{\rm vector}\ ,\qquad \underbrace{\left(4\phi,\psi^{+} \right)}_{\rm hyper}\ .
\ee
The two-form potentials, $B_{\m\n}^{\pm}$, have (anti-)self-dual field strengths. The spinors are symplectic Majorana-Weyl (SMW), $A=1,2$ labels the doublet components of  $R$ symmetry group $Sp(1)_R$, and  chiralities of the fermions are denoted by $\pm$. The hyperini are inert under $Sp(1)_R$ and they are typically SMW as well. 

As we shall review in the next section, the unimodularity of the rank-two charge lattice restricts the \(Sp(1)_R\)-gauged models to the two vector-multiplet numbers \cite{Becker:2025xgy}
\be
n_V=12 \qquad\text{or}\qquad n_V=96\ .
\label{1296}
\ee
In this paper, we will consider the following gauge groups
\begin{align}
n_V=96: \qquad & \qquad  \prod_{i=1}^n G_i \times Sp(1)_R\ ,\quad \sum_{i=1}^n {\rm dim}\, G_i =93\ ,
\label{96}\\
n_V=12:\qquad  & \qquad  
 \begin{cases} A_1^3 \times Sp(1)_R \\ A_2 \times U(1) \times Sp(1)_R
\end{cases}
\end{align}
where $G_i$ are simple and simply connected. To begin with, we shall write down the anomaly formula for a semisimple group
\be
G_{ss}= \prod_{i=1}^n G_i \times Sp(1)_R\\ ,
\label{g1}
\ee
and later  consider the inclusion of $U(1)$. The resulting anomaly polynomial for this model is 
known \cite{Suzuki:2005vu,Avramis:2005hc,Becker:2023zyb}. We shall briefly rederive it here because critical  factors of 2 due to the presence of SMW spinors and the trace notations can be confusing especially 
for the pseudo-real representations of $Sp(1)_R$. The \(Sp(1)_R\) gaugini decompose as \(\mathbf2\otimes\mathbf3=\mathbf2\oplus\mathbf4\). Their doublet and quartet components are denoted by \(\lambda^{A-}\) and \(\lambda^{ABC-}\), respectively, and both satisfy the SMW condition. The contributions of 
$\lambda^{A-}$ and the dilatino $\chi^{A+}$ to the anomaly polynomial precisely cancel, and we are 
left with the contributions of $\{\psi_\mu^{A-}, \lambda^{ABC-},\lambda_i^{A-},\psi^+\}$ \cite{Alvarez-Gaume:1983ihn,Alvarez-Gaume:1984zlq}, 
\begin{align}
P(\psi_\mu^{A-})
&= -\frac12\Bigg[
2\times \Big(\frac{245}{360}\tr R^4 -\frac{43}{288}(\tr R^2)^2\Big) 
+\frac{19}{6}\tr R^2\,\trtwo F_R^2 +\frac{10}{3}\trtwo F_R^4 \Bigg]\ ,
\label{eq:anom-gravitino}\w2
P(\lambda^{ABC-}) &= -\frac12\Bigg[
4\times \Big( \frac{1}{360}\tr R^4 + \frac{1}{288}(\tr R^2)^2
\Big) - \frac16\tr R^2\,\trfour F_R^2 +\frac23\trfour F_R^4 \Bigg]\ ,
\label{eq:anom-Rgaugino}\w2
P(\lambda_i^{A-}) &= -\frac12\Bigg[ 2\times (n_V-3)
\Big( \frac{1}{360}\tr R^4 + \frac{1}{288}(\tr R^2)^2 \Big) -\frac16\tr R^2\,
\tr\!\Big(F_{\rm adj}\times \mathbf 1_2+\mathbf 1_{n_V-3}\times F_R\Big)^2
\notag\\
&\qquad
+\frac23 \tr\!\Big(F_{\rm adj}\times \mathbf 1_2+\mathbf 1_{n_V-3}\times F_R\Big)^4
\Bigg] \ ,
\label{eq:anom-gaugino}\w2
P(\psi^+) &= n_H \times \left(\frac{1}{360}\tr R^4 + \frac{1}{288}(\tr R^2)^2
\right) -\frac16 \tr R^2\,\trH F^2  +\frac23 \trH F^4\ ,
\label{eq:anom-hyperino}
\end{align}
where the field strengths $F$ and $F_R$ are {\it anti-hermitian}, $\trtwo$ and $\trfour$ are the traces in the $2$ and $4$ dimensional representations of $Sp(1)_R$. The product factors $2\times$ and $4\times$ are due to the dimensionality of the representations ${\bf 2}$ and ${\bf 4}$. Denoting the representation of the hyperini by $\rho^H$, we use $\tr_H$ as abbreviation for the trace over $\rho^H$. A crucial point to notice is that the overall factors of $-1/2$ in the first three formulae above are due to the fact that the representations involved are negative chiral pseudo-real. In the formula for $P(\psi^+)$ we have not divided by 2 because we are treating the hyperini as Weyl spinors. For a pseudoreal representation \(R\), a half-hypermultiplet contributes one half of the anomaly of a Weyl fermion in \(R\). Next using the trace identities
\begin{align}
\trtwo F_R^4 &= \frac12 \left(\trtwo F_R^2\right)^2\ ,\qquad 
\trfour F_R^2 = 10\,\trtwo F_R^2\ ,\qquad  \trfour F_R^4 = 41\left(\trtwo F_R^2\right)^2 ,
\end{align}
and the well known relation
\be
n_H=n_V+244 ,
\ee
required for the absence of $\tr R^4$ terms in the anomaly polynomial\footnote{This is a necessary condition for the gravitational anomaly to vanish. We impose it here so that our formulae can be directly compared with those in the literature. In the next section, we will systematically list all local and global consistency conditions.}, it is straightforward to find that
\begin{align}
I_8  &= P(\psi_\mu^{A-})+ P(\lambda^{ABC-})+ P(\lambda_i^{A-}) + P(\psi^+)
\nn\w2
&= (\tr R^2)^2 +\frac{1}{12} (n_V-12) \tr R^2 \trtwo F_R^2 + \frac16 \tr R^2 \left( \sum_{i=1}^n \tr_{\rm ad} F_i^2 -\tr_H F^2\right)
\nn\w2
& -\frac16 \big(n_V+84 \big) \left(\trtwo F_R^2 \right)^2 -2 \trtwo F_R^2 \sum_{i=1}^n \tr_{\rm ad} F_i^2 +\frac23 \left( -\sum_{i=1}^n \tr_{\rm ad} F_i^4 +\tr_H F^4\right)\ ,
\label{ap}
\end{align}
in agreement with \cite{Suzuki:2005vu}.
For each ordinary simple factor we define the group-theory coefficients by
\begin{empheq}[box=\fbox]{align}
\tr_R F_i^2 &=A_R\,\tr F_i^2
\nn\\
\tr_R F_i^4 &=B_R\,\tr F_i^4
       +C_R\bigl(\tr F_i^2\bigr)^2 +P_R\,{\cal P}(F_i)
\label{eq:general-trace-coefficients}
\end{empheq}
where \({\cal P}(F_i)\) is the Pfaffian invariant for $D_4$ algebra defined by
\be 
{\cal P}(F) = \frac{1}{2^4 4!} \epsilon^{a_1...a_8} F_{a_1a_2}F_{a_3a_4}F_{a_5a_6}F_{a_7a_8}\ ,
\ee
and should be set to zero when $G_i\neq D_4$. The overall normalization of $P_R$ does not matter for our purpose.

The matter spectrum consists of blocks. A semisimple matter block
is an allowed irreducible product representation
\begin{equation}
 \mathcal R_\mu=(R_{\mu,1},\ldots,R_{\mu,n}),
 \qquad \mathrm{with~dimension}\quad d_\mu=\prod_{i=1}^{n}\dim R_{\mu,i}.
 \label{eq:semisimple-matter-block}
\end{equation}
In terms of the following expressions, 
\begin{align}
 A_i&=A_{{\rm adj},i}-\sum_\mu m_\mu A_{R_{\mu,i}} \prod_{k\ne i}\dim R_{\mu,k}\ , 
 \label{eq:net-A}\w2
 B_i &=B_{{\rm adj},i}-\sum_\mu m_\mu B_{R_{\mu,i}}
       \prod_{k\ne i}\dim R_{\mu,k}\ ,
\label{eq:net-B}\w2
C_i &=C_{{\rm adj},i}-\sum_\mu m_\mu C_{R_{\mu,i}}
       \prod_{k\ne i}\dim R_{\mu,k} \ , 
\label{eq:net-C}\w2
P_i &= P_{{\rm adj},i} -\sum_\mu m_\mu P_{R_{\mu,i}} \prod_{k\neq i}\dim R_{\mu,k}  \ ,
\label{eq:net-P}\w2
M_{ij} &=\sum_\mu m_\mu A_{R_{\mu,i}}A_{R_{\mu,j}}
       \prod_{k\ne i,j}\dim R_{\mu,k}\ ,
\label{a5}
\end{align}
where $m_\mu$ is the multiplicity of the block. In the full-hypermultiplet convention $m_\mu$ is a
nonnegative integer for a real or complex block and a nonnegative
half-integer for a pseudoreal block. The anomaly polynomial is given by 
\begin{align}
 I_8={}& \bigl(\tr R^2\bigr)^2
 +\left(\frac{n_V}{12}-1\right)
   \tr R^2\,\tr_{\mathbf 2}F_R^2 -2\left(\frac{n_V}{12}+7\right)
   \bigl(\tr_{\mathbf 2}F_R^2\bigr)^2 
\nn\\
&  +\sum_i\left\{
 \frac{A_i}{6}\tr R^2\tr F_i^2
 -\frac{2B_i}{3}\tr F_i^4 -\frac23 P_i {\cal P}(F_i)
 -\frac{2C_i}{3}\bigl(\tr F_i^2\bigr)^2\right\}
 \nonumber\\
 &+4\sum_{i<j}M_{ij}\tr F_i^2\tr F_j^2 -2\sum_{i=1}^{n}A_{{\rm adj},i}\,
  \tr F_i^2\,\tr_{\mathbf 2}F_R^2 \ ,
 \label{eq:general-one-loop-I8}
\end{align}
up to a common overall factor suppressed throughout the paper. 

One can further consider gauge groups with a single abelian factor
\be
G= \prod_{i=1}^n G_i \times U(1)\times Sp(1)_R\  .
\label{U1}
\ee
The matter spectrum consists of blocks labeled by $\mu$, carrying U(1) charge $q_\mu$.  Thus,  a matter block is a product representation
\begin{equation}
 \mathcal C_\mu=(R_{\mu,1},\ldots,R_{\mu,n})_{q_\mu}\ ,
 \qquad d_\mu=\prod_{i=1}^{n}\dim R_{\mu,i}\ .
 \label{eq:charged-matter-block}
\end{equation}
In writing down the anomaly polynomial in this case, it is convenient to 
introduce the spectrum-dependent abelian sums
\begin{align}
S_2&=\sum_\mu m_\mu q_\mu^2d_\mu\  ,
&
S_4&=\sum_\mu m_\mu q_\mu^4d_\mu\  ,
\\
S_{i,2}
&=\sum_\mu m_\mu q_\mu^2 A_{R_{\mu,i}}
  \prod_{k\neq i}\dim R_{\mu,k} \ ,
&
S_{i,1}
&=\sum_\mu m_\mu q_\mu E_{R_{\mu,i}}
  \prod_{k\neq i}\dim R_{\mu,k}\  .
\end{align}
where $E_R$ is defined by the following trace relation
\be
\tr_R F_i^3=E_R\,\tr F_i^3 \ ,
\ee
and is nontrivial only when \(G_i\) possesses a cubic invariant. The anomaly polynomial in this case is given by
\begin{align}
I_8={}&
 \bigl(\tr R^2\bigr)^2
 +\left(\frac{n_V}{12}-1\right)
  \tr R^2\,\tr_{\mathbf 2}F_R^2
 -2\left(\frac{n_V}{12}+7\right)
  \bigl(\tr_{\mathbf 2}F_R^2\bigr)^2
\nonumber\\
&+\sum_{i=1}^{n}
 \left[
  \frac{A_i}{6}\tr R^2\,\tr F_i^2
 -\frac{2B_i}{3}\tr F_i^4 -\frac23 P_i {\cal P}(F_i)
 -\frac{2C_i}{3}\bigl(\tr F_i^2\bigr)^2
 \right]
\nonumber\\
&+4\sum_{i<j}M_{ij}\,
  \tr F_i^2\,\tr F_j^2
-2\sum_{i=1}^{n}A_{{\rm adj},i}\,
  \tr F_i^2\,\tr_{\mathbf 2}F_R^2 
  \nonumber\\
  & -\frac{S_2}{6}\tr R^2\,F^2
 +\frac{2S_4}{3}F^4
+4\sum_{i=1}^{n}S_{i,2}\,
  F^2\,\tr F_i^2 +\frac{8}{3}\sum_{i=1}^{n}S_{i,1} F\,\tr F_i^3\ ,
\label{eq:complete-I8-U1-Sp1R}
\end{align}
where  \(F_i\), \(F\), and \(F_R\) denote the field strengths of \(G_i\),
the ordinary \(U(1)\), and \(Sp(1)_R\), respectively. There is no term proportional to \(F^2\tr_{\mathbf 2}F_R^2\), because no chiral fermion is charged
simultaneously under the ordinary \(U(1)\) and \(Sp(1)_R\).

\section{Consistency conditions}
\label{sec: consistency conditions}

We call a model consistent if (1) its local anomaly polynomial admits Green--Schwarz factorization, (2) its anomaly coefficients satisfy the quantization conditions required for a globally well-defined Green--Schwarz term, and (3) its gauge kinetic terms are positive in a nonempty domain of tensor-moduli space. If the gauge group consists of direct
products of $U(1)$ factors and simply connected simple compact groups, which shall be the case in this paper\footnote{For such groups, the computation in Appendix~A of \cite{Becker:2025xgy} gives \(\Omega^{\mathrm{Spin}}_7(BG)=0\), hence any residual bordism anomaly must also vanish.}, the conditions (1) and (2) amount to the following~\cite{Monnier:2017oqd,Monnier:2018nfs}:
(i) the anomaly polynomial must factorize as \eqref{gs}; (ii) the gravitational anomaly coefficient must be a characteristic element of the string-charge
lattice,
\begin{equation}
 a\cdot x\equiv x\cdot x\pmod 2
 \qquad \text{for every }x\in\Lambda ,
\label{eq:characteristic-element}
\end{equation}
where $a$ is defined in \eqref{eq: factorization} and $\Lambda$ is the string charge lattice; (iii) all anomaly vectors must be embedded in a unimodular rank-two lattice of signature $(1,1)$, as shall be detailed below\footnote{When there is an abelian factor, $b_{U(1)}/2\in\Lambda$ must also be satisfied, where $b_{U(1)}$ is defined in \eqref{eq: factorization}.}. The condition (ii) will be manifestly satisfied by all models we find in this paper. The condition (iii) turns out to put a restriction on the dimension $n_V$ of the gauge group, as we shall see below. Finally, the condition (3) on the positivity of the full gauge-kinetic matrix in a nonempty domain will exclude neutral \(U(1)\) factors, or drones.

\subsection{Local anomaly-free condition}

As is well known, the Green-Schwarz mechanism requires that all irreducible fourth-order invariants must vanish. In addition, if $G_{{\rm non}-R}$ contains a group that has a cubic invariant, then terms of the form $F_{U(1)} \tr F_i^3$ may arise, in which case their sum must also vanish. These conditions can be summarized as
\be
n_H=n_V+244\ ,\quad B_i=0\ , \qquad P_i=0\ ,\quad S_{i,1}=0\ .
\label{ga}
\ee
When the irreducible fourth-order anomaly terms vanish, one can write 
\be
I_8 = \sum_{x,y} M_{xy} K^x K^y \ ,\qquad K^x := (\tr R^2, \trtwo F_R^2, \tr F_i^2, F^2)\ ,
\label{fa}
\ee
where the traces without subscripts are in the fundamental representation of $G_i$, and $F$ is the $U(1)$ field strength. Now factorization of $I_8$ requires rank$(M) \le 2$ and that product of its eigenvalues $\lambda_1\lambda_2 \le 0$. When these conditions are satisfied we can write
\begin{equation}
  I_8={1\over2}\eta_{\alpha\beta} Y^\alpha Y^\beta\ ,
\label{gs}
\end{equation}
with
\be
Y^\alpha
 =\frac12a^\alpha\tr R^2
 +\sum_{i=1}^{n}\frac{2b_i^\alpha}{\lambda_i}\tr F_i^2
 +2b_{U(1)}^\alpha F^2
 +2b_R^\alpha\tr_{\mathbf 2}F_R^2 .
 \label{eq: factorization}
\ee
up to an overall factor which we suppress, and the $\lambda$-coefficients are given in \autoref{tab:lambda}. We set $\lambda=1$ for $U(1)$. 

\begin{table}[h]
    \centering
\begin{center}
    \begin{tabular}{|c|c|c|c|c|c|c|c|c|c|}
\hline
 & $A_n$ & $B_n$ & $C_n$ & $D_n$ & $E_6$ & $E_7$ & $E_8$ & $F_4$ & $G_2$\\
\hline
$\lambda$ & 1 & 2 & 1 & 2 & 6 & 12 & 60 & 6 & 2\\
\hline
    \end{tabular}
    \caption{\small Normalization factor $\lambda$ for different groups.}
    \label{tab:lambda}
\end{center}
\end{table}
Recalling \eq{ga}, and matching \eqref{eq:general-one-loop-I8} (or \eqref{eq:complete-I8-U1-Sp1R}) to \eqref{gs}, altogether we get the anomaly equations
\begin{subequations}
 \begin{align}
 &n_H =n_V+244\ ,
 \label{ab}\w2
&B_i =0\ , \qquad P_i=0\ ,\quad S_{i,1}=0\ ,
\label{c0}\w2
&a^2 =8\ ,\qquad p_R :=a\cdot b_R=\frac{n_V}{12}-1\ , \qquad b_R^2 =-\left(\frac{n_V}{12}+7\right)\ ,
\label{c1} \w2
&p_i  := a\cdot b_i =\frac{\lambda_i A_i}{6} \quad \implies \quad A_i= \frac{6p_i}{\lambda_i}\ ,
\label{pi}\w2
&q_i  := b_i\cdot b_R=-\frac{\lambda_i A_{{\rm adj},i}}{2}\ ,
\label{qi} \w2
&b_i^2 =-\frac{\lambda_i^2 C_i}{3} \quad \implies \quad C_i=-\frac{3b_i^2}{\lambda_i^2}\ ,
\w2
&b_i\cdot b_j =\lambda_i\lambda_j M_{ij}\ ,
\label{eq:general-local-anomaly-equations}\w2
&a\cdot b_{U(1)}=-\frac{S_2}{6}\ , \qquad b_{U(1)}^2=\frac{S_4}{3}\ ,\qquad  b_i\cdot b_{U(1)}=\lambda_i S_{i,2}\ ,
\label{BL}\w2
&b_{U(1)}\cdot b_R =0\ .
\label{ac1}
\end{align}
\end{subequations}
Among these equations those which depend on the multiplicities, denoted by $m_\mu$, are the following 
\begin{subequations}
\label{eq:anomaly-targets}
\begin{empheq}[box=\fbox]{align}
&\widehat A_i = A_{{\rm adj},i}-\frac{6p_i}{\lambda_i}, \qquad \widehat B_i = B_{{\rm adj},i},
\qquad
\widehat C_i = C_{{\rm adj},i}+\frac{3b_i^2}{\lambda_i^2}, \qquad \widehat P_i = P_{{\rm adj},i},
\label{ABC}\\
&M_{ij} = \frac{b_i\cdot b_j}{\lambda_i\lambda_j},
\label{link}\\
&a\cdot b_{U(1)} = -\frac{S_2}{6},
\qquad
b_{U(1)}^2=\frac{S_4}{3}, \qquad b_i\cdot b_{U(1)}=\lambda_i S_{i,2}\ ,\qquad S_{i,1}=0\ ,
\label{eq:U1-anomaly}
\end{empheq}
\end{subequations}
where  
\begin{subequations}
\begin{align}
{\widehat A}_i &:= \sum_\mu m_\mu A_{R_{\mu,i}} \prod_{k\ne i}\dim R_{\mu,k}\ , 
 \label{h1}\w2
{\widehat B}_i &:=\sum_\mu m_\mu B_{R_{\mu,i}} \prod_{k\ne i}\dim R_{\mu,k}\ ,
\label{h2}\w2
{\widehat C}_i &:=\sum_\mu m_\mu C_{R_{\mu,i}}
       \prod_{k\ne i}\dim R_{\mu,k} \ , 
\label{h3}\w2
{\widehat P}_i &:= \sum_\mu m_\mu P_{R_{\mu,i}} \prod_{k\neq i}\dim R_{\mu,k}\ ,
\label{h4}\w2
M_{ij} &=\sum_\mu m_\mu A_{R_{\mu,i}}A_{R_{\mu,j}}
       \prod_{k\ne i,j}\dim R_{\mu,k}\ .
\label{h5}
\end{align} 
\end{subequations}
Among the remaining anomaly equations, \eq{c1} together with the requirement of global anomaly freedom, will be shown to restrict $n_V=12$ or $n_V=96$.  \eq{qi} and \eq{ac1} put constraints on the anomaly-vector classes, and more specifically, they relate the two components of $b$-vector. Once an anomaly-vector class is fixed (see Sections \ref{sec: a search strategy} and \ref{sec: nV=96 branch} for details), finding the anomaly-free spectra reduces to solving \eqref{eq:anomaly-targets} for the matter multiplicities alone. 

\subsection{Unimodularity and the condition on $n_V$}

Importantly, even if the local anomaly cancellations are satisfied we must demand the embedding of all anomaly vectors in a unimodular rank-two lattice of signature $(1,1)$, by the requirement of freedom from certain global anomalies, extensively discussed in \cite{Monnier:2017oqd}, and especially in the context of $U(1)_R$ gauged $6D$ supergravities in \cite{Becker:2023zyb}, as mentioned earlier. In the case of $Sp(1)_R$ gauging the unimodularity requirement imposes a condition on the number $n_V$ of the vector multiplets \cite{Becker:2025xgy} as can be seen as follows. From \eq{c1}, we have
\begin{align}
a\cdot a &= 8\ ,
\w2
a\cdot b_R &= \frac{1}{12}(n_{V} - 12) := p_R \ ,
\w2
b_R\cdot b_R &= -\frac{1}{12}(n_{V} + 84)=-(p_R+8)\ .
\end{align}

Since \(a\) and \(b_R\) are vectors in the unimodular charge lattice
\(\Lambda\), the Gram determinant of the sublattice they span must be
the determinant of \(\Lambda\), namely \(-1\), times a perfect square \cite{Seiberg:2011dr}.
Equivalently, defining 
\be
K = \left( \begin{array}{cc}
a\cdot a & a\cdot b_R \\
a\cdot b_R & b_R\cdot b_R
\end{array} \right)\ ,
\label{defK}
\ee
we have
\be
\det K = \det \left( \begin{array}{cc}
8 & p_R \\
p_R & -(p_R+8)
\end{array} \right)=  -m^2\ ,\qquad m, p_R\in \mathbb{Z}\ .
\ee
This gives $(m-p_R-4)(m+p_R+4)=48$, and noting that $n_V=12(p_R+1)>0$, it has the solution $p_R=7,0$. Thus the unimodularity of the rank-two charge lattice already restricts the \(Sp(1)_R\)-gauged models to the vector-multiplet numbers \cite{Becker:2025xgy} $n_V=12$ or $n_V=96$.

\subsection{Ghost-free condition}

A pseudo-action for $6D$, $\mathcal{N}=(1,0)$, $Sp(1)_R$ gauged supergravity coupled to a tensor, vector and hypermultiplet can be found in \cite{Riccioni:2001bg}. The terminology of ``pseudo'' refers to the fact that all but the (anti)self-duality conditions on the 3-form field strengths follow from the action. A proper action, albeit non-manifestly Lorentz covariant, has been constructed in \cite{Bossard:2024ffp}. For our purposes here, it suffices to consider the Yang-Mills kinetic terms for the $Sp(1)_R$ gauged models we are studying here. In our conventions, these kinetic terms take the form 
\be
e^{-1} {\cal L}_{YM} = -\frac14 \left( v_z^+ e^\phi +v_z^- e^{-\phi}\right) F_{z\mu\nu}^r  F_z^{r \mu\nu} \ ,
\ee
where $r$ labels the adjoint representation,  $z=(i,R)$ labels the group factors and the vectors $v_z^\alpha = (v_z^+, v_z^-)$ are proportional to the anomaly vectors as discussed in detail in \cite{Guo:2025mlb}. The proportionality constant can be taken to be positive, and as such the ghost-free gauge kinetic terms impose the conditions
\be
\mbox{ghost freedom}\ \implies\  b_z^+ e^\phi +b_z^- e^{-\phi} >0\ ,\quad \forall z\ .
\label{gf}
\ee
Thus, if \(b_z^+\) and \(b_z^-\) are both nonnegative and at least one is strictly positive, the corresponding gauge kinetic term is positive for every finite value of the dilaton. If they have opposite signs, positivity restricts the dilaton to an interval, and ghost freedom requires a nonempty common interval for all gauge factors.

\subsection{Pseudo-action and the no-drone condition}

We now explain why \(U(1)\) factors under which all fields are neutral cannot furnish ordinary propagating vector multiplets. 
Although our scans contain  at most one non-$R$ abelian factor, we shall explain the drone-free condition in a more general setting. Use index $I=1,...,M$ to label the abelian vector fields. 
The relevant part of the  pseudo action constructed in \cite{Riccioni:2001bg} that describes the coupling of Yang-Mills multiplets and a number of abelian  vector multiplets is 
\be
{\cal L}_D= j\cdot b_z\,\tr_z \left(F\wedge ^\star F\right) -\frac12 j\cdot b_{IJ} F^I \wedge ^\star F^J
+B\cdot b_z \wedge \tr_z \left( F\wedge F\right)-\frac12 B\cdot b_{IJ} \,F^I \wedge F^J\ .
\ee
where the anomaly vectors $b_{IJ}^\alpha = b_{JI}^\alpha$ and $j\cdot b_z= \eta_{\alpha\beta}\,j^\alpha b_z^\beta$. 
If $j^\alpha$ is the unit positive tensor scalar,  the abelian gauge kinetic
matrix which is proportional to $j\cdot b_{IJ}$ must be positive definite for all $M$ abelian vectors to propagate
unitarily.

In the case of $M$ abelian vector fields, it suffices to focus on the generalization of the first two anomaly equations in \eq{BL}, which now take the form
\begin{align}
a\cdot b_{IJ}
&=-\frac16\sum_\mu m_\mu d_\mu q_{\mu I}q_{\mu J},
\label{eq:multi-u1-grav}\\
b_{IJ}\cdot b_{KL}+b_{IK}\cdot b_{JL}+b_{IL}\cdot b_{JK}
&=\sum_\mu m_\mu d_\mu q_{\mu I}q_{\mu J}q_{\mu K}q_{\mu L}.
\label{eq:multi-u1-quartic}
\end{align}
Choose a basis in which $U(1)_D$ is a drone, so $q_{\mu D}=0$ for every $\mu$.  Equation
\eqref{eq:multi-u1-grav} for $(DD)$ and $(DI)$ indices, and equation \eqref{eq:multi-u1-quartic}, with indices $(DDDD)$ and $(DDII)$ give
\begin{align}
 a\cdot b_{DD} &=0\ ,
 \label{eq:multi-u1-DD}\\
 a\cdot b_{DI}&=0\ ,
 \label{de1}\\
 b_{DD}^2 &=0\ ,
 \label{de2}\\
 b_{DD}\cdot b_{II}+2b_{DI}^2&=0\ .
 \label{eq:multi-u1-DDII}
\end{align}
Since $a^2=8>0$, $a$ is spacelike.  Indeed, choosing $a=(-2,-2)$, it readily follows from  \eq{eq:multi-u1-DD} and \eqref{de2} that $b_{DD}=0$.  It then follows
from \eqref{de1} and \eq{eq:multi-u1-DDII} that 
\begin{equation}
 b_{DI}=0
 \qquad\hbox{for every }I.
 \label{eq:multi-u1-zero-row}
\end{equation}
Thus, the $D$-th row and column of $j\cdot b_{IJ}$ vanish, and consequently the
kinetic matrix is degenerate, even when the most general off-diagonal
$b_{IJ}$ are initially allowed.  

If the number of tensor multiplets $n_T\ge 9$, then $a^2= 9-n_T \le 0$. In this case the above argument fails
because $a$ is null or timelike. Park and Taylor \cite{Park:2011wv} show that a nonzero null vector $b$ may then satisfy
$a\cdot b=0$ and $j\cdot b>0$; taking $b_{IJ}=\delta_{IJ}b$ can give a
positive-definite kinetic matrix for arbitrarily many neutral $U(1)$ factors.

\section{A search strategy}
\label{sec: a search strategy}

We now formulate a search strategy that applies both to the
semisimple gauge groups in \eqref{g1} and, after a finite charge decoration of
the matter representations, to \eqref{U1}. For the former, we supplement this strategy with the spectra-assembly method to substantially reduce the computational workload. The details will be discussed in Section \ref{sec: nV=96 branch}. 
We first classify the anomaly vectors in the rank-two string-charge lattice. For each simple group factor, the anomaly equations and lattice integrality constrain the allowed gauge anomaly vectors.  Each surviving choice
then fixes the RHS of the pure and mixed gauge-anomaly equations given in \eqref{eq:anomaly-targets}. When there is a charged abelian factor, constraints on the corresponding anomaly vector can be derived similarly and the details are given in Section \ref{sec:a2u1}.

We next ask whether those candidate anomaly vectors can be realized by a physical
hypermultiplet spectrum.  Every allowed product representation contributes a
definite column of group-theory coefficients.  Its physical multiplicity is
denoted by $m_\mu$ and is measured in {\it full-hypermultiplet} units. These multiplicities must solve the
anomaly equations subject to the hypermultiplet reality conditions.  

To be more precise, for a fixed gauge algebra, representation domain, and anomaly-vector
class \(\Sigma\), let \(m_\mu\) denote the physical multiplicity of the
product representation \(\mathcal R_\mu\), measured in
full-hypermultiplet units.  Thus we have 
\begin{equation}
 m_\mu\in\mathbb Z_{\ge0}
 \quad\hbox{for real or complex }\mathcal R_\mu,
 \qquad
 m_\mu\in\frac12\mathbb Z_{\ge0}
 \quad\hbox{for pseudoreal }\mathcal R_\mu.
 \label{eq:physical-multiplicity-rule}
\end{equation}

\subsection*{The anomaly equations in matrix form}

The anomaly equations are linear in the multiplicities as unknown variables, and the
spectrum satisfies the gravitational anomaly equation $n_H=244+n_V$. Schematically, these equations can be written as
\begin{equation}
\begin{aligned}
    Dm&=t_\Sigma,\\
    d^{\mathsf T}m+s&=n_H, \qquad s\in\mathbb Z_{\ge0},
\end{aligned}
 \label{ga-physical}
\end{equation}
where \(s\) is the number of hypermultiplets neutral under the non-\(R\)
gauge group, the matrix $D$ collects the anomaly data of the allowed matter blocks: each column corresponds to a product representation, with entries given by the relevant group-theory coefficients, $t_\Sigma$ organizes the RHS of \eqref{eq:anomaly-targets} for a given anomaly-vector class into a column, and the vector $d$ encodes the dimension of each product irreducible representation.

More explicitly, define the nonabelian entries contributed by one matter block by
\begin{align}
\mathcal A_{i\mu}
&:=A_{R_{\mu i}}\prod_{k\neq i}\dim R_{\mu k},
&
\mathcal B_{i\mu}
&:=B_{R_{\mu i}}\prod_{k\neq i}\dim R_{\mu k},\\
\mathcal C_{i\mu}
&:=C_{R_{\mu i}}\prod_{k\neq i}\dim R_{\mu k},
&
\mathcal P_{i\mu}
&:=P_{R_{\mu i}}\prod_{k\neq i}\dim R_{\mu k},\\
\mathcal M_{ij,\mu}
&:=A_{R_{\mu i}}A_{R_{\mu j}}
\prod_{k\neq i,j}\dim R_{\mu k}.
\end{align}
Let $\{i_1,\ldots,i_r\}$ label the set of simple factors possessing a cubic invariant. Then the complete matrix and target vector can be written as
\begin{equation}
\resizebox{0.75\linewidth}{!}{$\displaystyle
\begin{aligned}
D &=
\begin{pmatrix}
\mathcal A_{11} & \mathcal A_{12} & \cdots & \mathcal A_{1N}\\
\mathcal B_{11} & \mathcal B_{12} & \cdots & \mathcal B_{1N}\\
\mathcal C_{11} & \mathcal C_{12} & \cdots & \mathcal C_{1N}\\
\mathcal P_{11} & \mathcal P_{12} & \cdots & \mathcal P_{1N}\\
\vdots & \vdots & & \vdots\\
\mathcal A_{n1} & \mathcal A_{n2} & \cdots & \mathcal A_{nN}\\
\mathcal B_{n1} & \mathcal B_{n2} & \cdots & \mathcal B_{nN}\\
\mathcal C_{n1} & \mathcal C_{n2} & \cdots & \mathcal C_{nN}\\
\mathcal P_{n1} & \mathcal P_{n2} & \cdots & \mathcal P_{nN}\\[1mm]
\mathcal M_{12,1} & \mathcal M_{12,2} & \cdots & \mathcal M_{12,N}\\
\mathcal M_{13,1} & \mathcal M_{13,2} & \cdots & \mathcal M_{13,N}\\
\vdots & \vdots & & \vdots\\
\mathcal M_{n-1,n,1} & \mathcal M_{n-1,n,2}
 & \cdots & \mathcal M_{n-1,n,N}\\[1mm]
\mathcal E_{i_1 1} & \mathcal E_{i_1 2} & \cdots & \mathcal E_{i_1 N}\\
\vdots & \vdots & & \vdots\\
\mathcal E_{i_r 1} & \mathcal E_{i_r 2} & \cdots & \mathcal E_{i_r N}\\[1mm]
\mathcal U_{11} & \mathcal U_{12} & \cdots & \mathcal U_{1N}\\
\vdots & \vdots & & \vdots\\
\mathcal U_{n1} & \mathcal U_{n2} & \cdots & \mathcal U_{nN}\\[1mm]
\mathcal Q_{2,1} & \mathcal Q_{2,2} & \cdots & \mathcal Q_{2,N}\\
\mathcal Q_{4,1} & \mathcal Q_{4,2} & \cdots & \mathcal Q_{4,N}
\end{pmatrix},
\qquad
t_\Sigma &=
\begin{pmatrix}
\widehat A_1\\
\widehat B_1\\
\widehat C_1\\
\widehat P_1\\
\vdots\\
\widehat A_n\\
\widehat B_n\\
\widehat C_n\\
\widehat P_n\\[1mm]
M_{12}\\
M_{13}\\
\vdots\\
M_{n-1,n}\\[1mm]
0\\
\vdots\\
0\\[1mm]
\dfrac{b_1\cdot b_U}{\lambda_1}\\
\vdots\\
\dfrac{b_n\cdot b_U}{\lambda_n}\\[3mm]
-6\,a\cdot b_U\\[1mm]
3\,b_U^2
\end{pmatrix}.
\end{aligned}
$}
\label{eq:master-matrix-system}
\end{equation}
Here $n$ and $N$ are the numbers of simple factors and matter blocks, respectively. A $\mathcal P_i$ row and the corresponding $\widehat P_i$ target are present only for $G_i=D_4$. For a purely semisimple gauge group, the rows beginning with $\mathcal E_{i_a\mu}$ and all subsequent $U(1)$-dependent rows are absent.

When a single charged $U(1)$ factor is present, the following charge-dependent rows are appended:
\begin{align}
\mathcal E_{i\mu}
&:=q_\mu E_{R_{\mu i}}\prod_{k\neq i}\dim R_{\mu k},
\quad\text{for $G_i$ with a cubic invariant},
\\
\mathcal U_{i\mu}
&:=q_\mu^2 A_{R_{\mu i}}\prod_{k\neq i}\dim R_{\mu k},
\\
\mathcal Q_{2,\mu}
&:=q_\mu^2d_\mu,\\
\mathcal Q_{4,\mu}
&:=q_\mu^4d_\mu.
\end{align}
The first row encodes the irreducible $U(1)-G_i^3$ anomaly, the second the
$U(1)^2-G_i^2$ anomaly, and the last two the $\mathrm{gravitational}-U(1)^2$ and
$U(1)^4$ anomalies.

\section{The \(n_V=96\) semisimple branch}
\label{sec: nV=96 branch}

In this branch, the group $G_{ss}=\prod_{i=1}^nG_i \times Sp(1)_R$ is semisimple, it has dimension $n_V=96$, and $A_1,A_2$ are excluded. The representation domain consists of all irreducible representations of the group $\prod_{i=1}^nG_i$ whose dimensions are compatible with $n_H = 244+n_V=340$\footnote{Pseudoreal irreducible representations with dimensions between 341 and 680 are also included since they allow half-integer multiplicities.}.

\subsection{Conditions on anomaly vectors from unimodular lattice embedding}

We use light-cone coordinates on the rational anomaly space, in which
\be
 \eta=
 \begin{pmatrix}0&1\\1&0\end{pmatrix}\ ,
 \qquad
 a=(-2,-2),
 \qquad
 b_R=\left(\frac32,-5\right)\ .
\ee
Thus, for \(u=(u_1,u_2)\) and \(v=(v_1,v_2)\)\ ,
\be
 u\cdot v=u_1v_2+u_2v_1\ .
\ee
The coordinate vectors \((1,0)\) and \((0,1)\) are not being taken as
an integral basis of the string-charge lattice.  Since
\(b_R^2=-15\) is odd, the unimodular lattice is necessarily the
odd one \(I_{1,1}\).

An explicit integral basis is
\be
 e=\frac{4a+b_R}{13}
   =\left(-\frac12,-1\right)\ ,
 \qquad
 f=a-3e
   =\left(-\frac12,1\right)\ ,
\ee
for which
\be
 e^2=1,\qquad f^2=-1,\qquad e\cdot f=0\ .
\ee
Hence
\be
\Lambda=\mathbb Ze\oplus\mathbb Zf\simeq I_{1,1}\ , \qquad a=3e+f,\qquad b_R=e-4f\ .
\ee
Writing
\be
b_i=x_i e+y_i f\ ,
\ee
and defining
\be
\boxed{ p_i=a\cdot b_i}\ , \qquad \boxed{q_i=b_R\cdot b_i}\ ,
\ee
we have, since \(a=3e+f\) and \(b_R=e-4f\),
\be
p_i=3x_i-y_i\ , \qquad q_i=x_i+4y_i\ .
\ee
Solving for the integral-basis coordinates gives
\be
 x_i=\frac{4p_i+q_i}{13}\ , \qquad y_i=\frac{3q_i-p_i}{13}\ .
\ee
The lattice condition  \( b_i\in\Lambda \) is therefore equivalent to
\be
 x_i,y_i\in\mathbb Z\ ,
\ee
or, equivalently,
\be 
\boxed{p_i\equiv3q_i\pmod{13}}\ .
\ee

Since \(e^2=1\), \(f^2=-1\), and \(e\cdot f=0\), from $b_i\cdot b_j
 =x_ix_j-y_iy_j $ one finds 
\begin{align}
 b_i\cdot b_j
 &=
 \frac{
 15p_ip_j+7(p_iq_j+q_ip_j)-8q_iq_j
 }{169}\ .
 \label{eq:bi-bj}
\end{align}
and $b_i^2 = b_i \cdot b_j \vert_{j=i}$.

\subsection{Semisimple groups of dimension $93$}

Before imposing any local anomaly-vector test, the simple factor types of
dimension at most \(93\), with \(A_1\), \(A_2\), and \(U(1)\) omitted, are
\be
 \mathcal S_{93}=\{A_3,\ldots,A_8;\ B_3,\ldots,B_6;
 C_2,\ldots,C_6; \ D_4,\ldots,D_7;\ G_2,F_4,E_6\}\ .
\ee
Enumerating unordered products of factors from \(\mathcal S_{93}\) whose dimensions sum to \(93\) gives 79 raw semisimple gauge-algebra sectors.

\subsection{Origin of the one-factor targets}

A complete matter block is a product representation
\be
 \mathcal R_\mu=\bigotimes_{j=1}^n R_{j\mu}\ ,
\ee
with physical multiplicity \(m_\mu\) in full-hypermultiplet units.  Its
contribution to the pure anomaly equations of a fixed \(G_i\) is described by
\be
 A_{i\mu} =A_{R_{i\mu}}\prod_{j\ne i}\dim R_{j\mu},\qquad
 B_{i\mu} =B_{R_{i\mu}}\prod_{j\ne i}\dim R_{j\mu},\qquad
 C_{i\mu} =C_{R_{i\mu}}\prod_{j\ne i}\dim R_{j\mu}\ .
\ee
Consequently, the one-factor anomaly equations inherited from a multi-factor spectrum are
\be
 \sum_\mu m_\mu A_{i\mu}=\widehat A_i,\qquad
 \sum_\mu m_\mu B_{i\mu}=\widehat B_i,\qquad
 \sum_\mu m_\mu C_{i\mu}=\widehat C_i\ .
 \label{eqn: one-factor anomaly equations}
\ee

For the relaxed one-factor problem, it is convenient to collect all
copies of a given irreducible representation \(R\) of \(G_i\) into the
effective multiplicity
\be
 m_R^{(i)}
 := \sum_{\mu:\,R_{i\mu}=R} m_\mu\prod_{j\ne i}\dim R_{j\mu}\ .
 \label{eq: effective multiplicity projection}
\ee
The same equations may then be written
\be
 \sum_R m_R^{(i)}A_R=\widehat A_i\ ,\qquad
 \sum_R m_R^{(i)}B_R=\widehat B_i\ ,\qquad
 \sum_R m_R^{(i)}C_R=\widehat C_i\ .
\ee
with
\be
 \widehat A_i=A_{\mathrm{adj},i}-\frac{6p_i}{\lambda_i}\ ,
 \qquad
 \widehat B_i=B_{\mathrm{adj},i}\ ,
 \qquad
 \widehat C_i=C_{\mathrm{adj},i}
       +\frac{3b_i^2}{\lambda_i^2}\ .
\ee
For \(D_4\), the Pfaffian equation is imposed as a separate signed row.
These hatted quantities therefore determine the matter side of the anomaly equations, namely LHS of \eqref{ABC}.

For fixed \(G\), the mixed \(G^2-Sp(1)_R\) equation first fixes \(q=-\frac{\lambda}{2}A_{\rm adj}\).  Lattice
integrality then restricts \(p\) to be of the form
\be
 p=3q+13\ell,\qquad \ell\in\mathbb Z\ .
 \label{LI}
\ee
Although this congruence defines an infinite arithmetic progression, the $n_H$ bound makes the relevant search finite.  Indeed, if
\be
 \kappa_G:= \max_{R\in\mathcal R_G}\frac{C_R}{\dim R}\ ,
\ee
where \(\mathcal R_G\) is the finite representation domain used in the
scan, then every one-factor solution must satisfy
\be
 \widehat C_G(p)
 =\sum_R m_RC_R \leq \kappa_G\sum_Rm_R\dim R \leq340\kappa_G\ .
 \label{eqn: quadratic in p}
\ee
Since \(\widehat C_G(p)\) is quadratic in \(p\), this inequality leaves
only finitely many values in the required congruence class. Note that \eqref{eqn: quadratic in p} provides only necessary conditions, and not every candidate admits a physical spectrum.

\subsection{The 48 one-factor candidate anomaly-vector classes}

The factors possessing at least one anomaly-vector class are 
\be
 \{A_3,\ldots,A_7;\ B_3,B_4,B_5;\ C_2,\ldots,C_6;\
   D_4,D_5,D_7;\ G_2,E_6\}\ .
\ee
The remaining raw factor types \(A_8,B_6,D_6,F_4\) possess no admissible $b$-vectors in the stated representation and dimension domain.  (The types \(A_8\) and \(D_6\) in fact occur in no dimension-93 product.)
Table~\ref{tab:local48} gives every retained class.
The last column gives the coordinates \((x,y)\) in \(b=xe+yf\).

\begingroup
\scriptsize
\setlength{\tabcolsep}{2.7pt}
\renewcommand{\arraystretch}{1.08}
\begin{center}
\begin{tabular}{r c r r c @{\hspace{0.42in}} r c r r c}
\toprule
\#&\(G\)&\(p\)&\(q\)&\((x,y)\)&
\#&\(G\)&\(p\)&\(q\)&\((x,y)\)\\
\midrule
1 & $A_3$ & $-77$ & $-4$ & $(-24,5)$ & 25 & $C_2$ & $-87$ & $-3$ & $(-27,6)$ \\
2 & $A_3$ & $-64$ & $-4$ & $(-20,4)$ & 26 & $C_2$ & $-74$ & $-3$ & $(-23,5)$ \\
3 & $A_3$ & $-51$ & $-4$ & $(-16,3)$ & 27 & $C_2$ & $-61$ & $-3$ & $(-19,4)$ \\
4 & $A_3$ & $-38$ & $-4$ & $(-12,2)$ & 28 & $C_2$ & $-48$ & $-3$ & $(-15,3)$ \\
5 & $A_3$ & $-25$ & $-4$ & $(-8,1)$ & 29 & $C_2$ & $-35$ & $-3$ & $(-11,2)$ \\
6 & $A_3$ & $-12$ & $-4$ & $(-4,0)$ & 30 & $C_2$ & $-22$ & $-3$ & $(-7,1)$ \\
7 & $A_4$ & $-28$ & $-5$ & $(-9,1)$ & 31 & $C_2$ & $-9$ & $-3$ & $(-3,0)$ \\
8 & $A_4$ & $-15$ & $-5$ & $(-5,0)$ & 32 & $C_3$ & $-25$ & $-4$ & $(-8,1)$ \\
9 & $A_4$ & $-2$ & $-5$ & $(-1,-1)$ & 33 & $C_3$ & $-12$ & $-4$ & $(-4,0)$ \\
10 & $A_5$ & $-18$ & $-6$ & $(-6,0)$ & 34 & $C_4$ & $-15$ & $-5$ & $(-5,0)$ \\
11 & $A_5$ & $-5$ & $-6$ & $(-2,-1)$ & 35 & $C_4$ & $-2$ & $-5$ & $(-1,-1)$ \\
12 & $A_6$ & $-21$ & $-7$ & $(-7,0)$ & 36 & $C_5$ & $-5$ & $-6$ & $(-2,-1)$ \\
13 & $A_6$ & $-8$ & $-7$ & $(-3,-1)$ & 37 & $C_6$ & $-8$ & $-7$ & $(-3,-1)$ \\
14 & $A_7$ & $-24$ & $-8$ & $(-8,0)$ & 38 & $D_4$ & $-18$ & $-6$ & $(-6,0)$ \\
15 & $A_7$ & $-11$ & $-8$ & $(-4,-1)$ & 39 & $D_4$ & $-5$ & $-6$ & $(-2,-1)$ \\
16 & $B_3$ & $-41$ & $-5$ & $(-13,2)$ & 40 & $D_5$ & $-24$ & $-8$ & $(-8,0)$ \\
17 & $B_3$ & $-28$ & $-5$ & $(-9,1)$ & 41 & $D_5$ & $-11$ & $-8$ & $(-4,-1)$ \\
18 & $B_3$ & $-15$ & $-5$ & $(-5,0)$ & 42 & $D_5$ & $2$ & $-8$ & $(0,-2)$ \\
19 & $B_3$ & $-2$ & $-5$ & $(-1,-1)$ & 43 & $D_7$ & $-10$ & $-12$ & $(-4,-2)$ \\
20 & $B_4$ & $-21$ & $-7$ & $(-7,0)$ & 44 & $G_2$ & $-64$ & $-4$ & $(-20,4)$ \\
21 & $B_4$ & $-8$ & $-7$ & $(-3,-1)$ & 45 & $G_2$ & $-38$ & $-4$ & $(-12,2)$ \\
22 & $B_5$ & $-14$ & $-9$ & $(-5,-1)$ & 46 & $G_2$ & $-25$ & $-4$ & $(-8,1)$ \\
23 & $C_2$ & $-113$ & $-3$ & $(-35,8)$ & 47 & $G_2$ & $-12$ & $-4$ & $(-4,0)$ \\
24 & $C_2$ & $-100$ & $-3$ & $(-31,7)$ & 48 & $E_6$ & $-10$ & $-12$ & $(-4,-2)$ \\
\bottomrule
\end{tabular}
\captionof{table}{The 48 one-factor candidate anomaly-vector classes in the
\(n_V=96\) semisimple branch.  Each row entry represents
\((G,p,q)\), with \(b=xe+yf\). Note that these are candidate anomaly-vector classes and do not necessarily give rise to anomaly-free spectra.}
\label{tab:local48}
\end{center}
\endgroup

\subsection{Gauge sectors and global candidate classes}

    For a gauge sector containing \(m_{G_i}\) copies of a factor \(G_i\), let \(s_{G_i}\)
be the number of its entries in Table~\ref{tab:local48}.  Permuting identical
factors does not change the class, so the number of total anomaly-vector classes for $G_{ss}$ is
\be
 N(\mathfrak g)
 =\prod_i\binom{s_{G_i}+m_{G_i}-1}{m_{G_i}}\ .
\ee
The exact totals are
\be
\begin{array}{c|rrrrrrr}
 n&2&3&4&5&6&7&8\\ \hline
 \#\mathfrak g&3&15&20&18&14&5&1\\
 \#\Sigma&18&252&2{,}596&11{,}406&48{,}108&96{,}084&77{,}220
\end{array}
\ee
and hence
\be
 \sum_n\#\mathfrak g=76,\qquad
 \sum_n\#\Sigma=235{,}684\ .
\ee
Table~\ref{tab:76sectors} lists the 76 sectors and the number of
candidate anomaly-vector classes generated in each one.

\begingroup
\small
\setlength{\tabcolsep}{5pt}
\renewcommand{\arraystretch}{1.04}
\begin{longtable}{@{}r r >{\raggedright\arraybackslash}p{0.56\textwidth} r@{}}
\toprule
\#&\(n\)&gauge algebra \(\mathfrak g\)&\(N_{\mathrm{sig}}\)\\
\midrule
\endfirsthead
\toprule
\#&\(n\)&gauge algebra \(\mathfrak g\)&\(N_{\mathrm{sig}}\)\\
\midrule
\endhead
1 & 2 & $A_3\times C_6$ & 6 \\
2 & 2 & $A_3\times E_6$ & 6 \\
3 & 2 & $A_6\times D_5$ & 6 \\
4 & 3 & $A_3^{2}\times A_7$ & 42 \\
5 & 3 & $A_4\times A_6\times B_3$ & 24 \\
6 & 3 & $A_4\times A_6\times C_3$ & 12 \\
7 & 3 & $A_4\times C_5\times G_2$ & 12 \\
8 & 3 & $A_4^{2}\times D_5$ & 18 \\
9 & 3 & $A_5\times A_6\times C_2$ & 36 \\
10 & 3 & $B_3\times B_4\times C_4$ & 16 \\
11 & 3 & $B_3\times B_4^{2}$ & 12 \\
12 & 3 & $B_3\times C_4^{2}$ & 12 \\
13 & 3 & $B_4\times C_3\times C_4$ & 8 \\
14 & 3 & $B_4^{2}\times C_3$ & 6 \\
15 & 3 & $C_2\times C_5\times D_4$ & 18 \\
16 & 3 & $C_3\times C_4^{2}$ & 6 \\
17 & 3 & $A_4\times B_5\times G_2$ & 12 \\
18 & 3 & $B_5\times C_2\times D_4$ & 18 \\
19 & 4 & $A_3\times B_3\times B_4\times C_3$ & 96 \\
20 & 4 & $A_3\times B_3\times C_3\times C_4$ & 96 \\
21 & 4 & $A_3\times B_3^{2}\times B_4$ & 120 \\
22 & 4 & $A_3\times B_3^{2}\times C_4$ & 120 \\
23 & 4 & $A_3\times B_4\times C_3^{2}$ & 36 \\
24 & 4 & $A_3\times B_4\times D_4\times G_2$ & 96 \\
25 & 4 & $A_3\times C_3^{2}\times C_4$ & 36 \\
26 & 4 & $A_3\times C_4\times D_4\times G_2$ & 96 \\
27 & 4 & $A_3^{2}\times A_5\times D_4$ & 84 \\
28 & 4 & $A_3^{3}\times A_6$ & 112 \\
29 & 4 & $A_4\times C_2\times D_5\times G_2$ & 324 \\
30 & 4 & $A_4^{2}\times A_5\times C_2$ & 108 \\
31 & 4 & $A_4^{3}\times B_3$ & 40 \\
32 & 4 & $A_4^{3}\times C_3$ & 20 \\
33 & 4 & $A_6\times B_3\times C_2\times G_2$ & 288 \\
34 & 4 & $A_6\times C_2\times C_3\times G_2$ & 144 \\
35 & 4 & $A_7\times C_2^{3}$ & 330 \\
36 & 4 & $C_2\times C_5\times G_2^{2}$ & 90 \\
37 & 4 & $C_2^{2}\times D_4\times D_5$ & 270 \\
38 & 4 & $B_5\times C_2\times G_2^2$ & 90 \\
39 & 5 & $A_3\times A_6\times C_2^{3}$ & 1,980 \\
40 & 5 & $A_3\times B_4\times G_2^{3}$ & 240 \\
41 & 5 & $A_3\times C_4\times G_2^{3}$ & 240 \\
42 & 5 & $A_3^{2}\times A_5\times G_2^{2}$ & 420 \\
43 & 5 & $A_3^{2}\times B_3\times C_3^{2}$ & 252 \\
44 & 5 & $A_3^{2}\times B_3\times D_4\times G_2$ & 672 \\
45 & 5 & $A_3^{2}\times B_3^{2}\times C_3$ & 420 \\
46 & 5 & $A_3^{2}\times B_3^{3}$ & 420 \\
47 & 5 & $A_3^{2}\times C_3\times D_4\times G_2$ & 336 \\
48 & 5 & $A_3^{2}\times C_3^{3}$ & 84 \\
49 & 5 & $A_3^{3}\times A_4^{2}$ & 336 \\
50 & 5 & $A_4\times A_5\times C_2^{2}\times G_2$ & 1,080 \\
51 & 5 & $A_4\times B_3\times C_2^{2}\times D_4$ & 1,080 \\
52 & 5 & $A_4\times C_2^{2}\times C_3\times D_4$ & 540 \\
53 & 5 & $A_4^{2}\times B_3\times C_2\times G_2$ & 864 \\
54 & 5 & $A_4^{2}\times C_2\times C_3\times G_2$ & 432 \\
55 & 5 & $A_5\times C_2^{3}\times D_4$ & 660 \\
56 & 5 & $C_2^{2}\times D_5\times G_2^{2}$ & 1,350 \\
57 & 6 & $A_3\times A_4^{2}\times C_2^{3}$ & 5,940 \\
58 & 6 & $A_3^{2}\times B_3\times G_2^{3}$ & 1,680 \\
59 & 6 & $A_3^{2}\times C_3\times G_2^{3}$ & 840 \\
60 & 6 & $A_3^{3}\times A_4\times C_2\times G_2$ & 6,048 \\
61 & 6 & $A_3^{3}\times C_2^{2}\times D_4$ & 5,040 \\
62 & 6 & $A_4\times B_3\times C_2^{2}\times G_2^{2}$ & 5,400 \\
63 & 6 & $A_4\times C_2^{2}\times C_3\times G_2^{2}$ & 2,700 \\
64 & 6 & $A_5\times C_2^{3}\times G_2^{2}$ & 3,300 \\
65 & 6 & $B_3\times C_2^{3}\times C_3^{2}$ & 1,980 \\
66 & 6 & $B_3\times C_2^{3}\times D_4\times G_2$ & 5,280 \\
67 & 6 & $B_3^{2}\times C_2^{3}\times C_3$ & 3,300 \\
68 & 6 & $B_3^{3}\times C_2^{3}$ & 3,300 \\
69 & 6 & $C_2^{3}\times C_3\times D_4\times G_2$ & 2,640 \\
70 & 6 & $C_2^{3}\times C_3^{3}$ & 660 \\
71 & 7 & $A_3\times A_4\times C_2^{4}\times G_2$ & 35,640 \\
72 & 7 & $A_3\times C_2^{5}\times D_4$ & 15,444 \\
73 & 7 & $A_3^{3}\times C_2^{2}\times G_2^{2}$ & 25,200 \\
74 & 7 & $B_3\times C_2^{3}\times G_2^{3}$ & 13,200 \\
75 & 7 & $C_2^{3}\times C_3\times G_2^{3}$ & 6,600 \\
76 & 8 & $A_3\times C_2^{5}\times G_2^{2}$ & 77,220 \\
\bottomrule
\caption{The $76$ gauge algebra sectors with non-empty candidate anomaly-vector classes. Here $n$ denotes the number of group factors, and $N_{sig}$ is the number of candidate anomaly-vector classes, totaling 235,684.}
\label{tab:76sectors}\\
\end{longtable}
\endgroup

Directly solving the anomaly equations for the 76 gauge sectors and $235{,}684$ candidate anomaly-vector classes is prohibitively expensive: for each class, the multiplicities of all admissible product irreducible representations become variables in a Diophantine system, and the number of variables can reach the thousands when several low-dimensional simple factors are present. Instead, we use the spectra--assembly method described below, which avoids scanning every candidate anomaly-vector class against every product-representation combination. 

\subsection{The spectra--assembly method}
The key ingredient in this approach, which we call the spectra-assembly method, is the assembly of candidate spectra for an $n$-factor model from the knowledge of lower factor anomaly-free spectra.

We consider six-dimensional $\mathcal N=(1,0)$ supergravity models with gauge group
\be
 {\cal G}_n =G_1\times\cdots\times G_n\times U(1)^M\times \Sp(1)_R\ ,
 \label{G}
\ee
where every hypermultiplet is neutral under the abelian factor $U(1)^M$.  These inert vector multiplets 
will be called \emph{drones}.  The same assumptions concerning the allowed simple groups, representations, multiplicities, singlets, 
and half-hypermultiplets are imposed at every value of $n$.
In the method we are employing, we do not impose the condition $M=0$ from the start. Rather, we find all local and global anomaly-free spectra (possibly with drones), and finally impose the drone-free condition.

\subsubsection*{Projections preserving anomaly-free conditions}

Let $\mathcal M_{n,M}$ be a candidate model with non-abelian factors
$G_1,\ldots,G_{n}$.  Its $i$-th projection is obtained by deleting $G_i$ and 
replacing $\dim G_i$ vector multiplets by the same number of drone vector multiplets,
\be
 {\cal G}_n 
 \longrightarrow
   \pi_i ({\cal G}_n) = \prod_{j\ne i}G_j\times U(1)^{M+\dim G_i}\times\Sp(1)_R.
\ee
The total number of vector multiplets is therefore unchanged.

At the level of matter, the deleted representation index is converted into a multiplicity.  A hypermultiplet
\be
 (R_1,\ldots,R_i,\ldots,R_{n})
 \label{df1}
\ee
contributes 
\be
 \dim R_i\,(R_1,\ldots,\widehat{R_i},\ldots,R_{n})
\ee
to the $i$-th projection. This represents the block \eq{df1} with $R_i$ omitted, and with multiplicity ${\rm dim}\, R_i$.
The additional ${\rm dim}\, G_i$ abelian factors are inert, and all projected fields remain neutral under the full drone group. Because both $n_V$ and $n_H$ are preserved by this operation, 
the projection preserves the anomaly equations involving the surviving
factors. This follows directly from the trace identities. For example,
for a product representation \(R_i\otimes R_j\),
\begin{equation}
  \tr_{R_i\otimes R_j}F_i^2
  =
  (\dim R_j)\,\tr_{R_i}F_i^2 ,
\end{equation}
and similarly for quartic traces. Hence the contribution of the parent
representation to every pure anomaly of a surviving factor is precisely
the contribution of its projection. The same statement holds for the
gravitational and mixed anomalies among the factors that
remain after projection. The symplectic Majorana--Weyl reality condition
is inherited by the projected hyperfermion representation. Accordingly, the projection of any genuine \(n\)-factor model must lie
in the complete \((n-1)\)-factor catalogue used by
the assembly procedure.


The hypermultiplet reality condition is preserved by the projection.  In the full-hypermultiplet convention, a parent multiplicity $m$ can be half-integral only when the full product representation $(R_i\otimes R_{\rm rest})$ is pseudoreal.  If $R_{\rm rest}$ is pseudoreal, the projected block is again allowed to carry a half-integral multiplicity.  If $R_{\rm rest}$ is real, then $R_i$ must be pseudoreal; since every pseudoreal representation has even dimension, the projected multiplicity $(\dim R_i)m$ is integral, as required for a real block.  Complex parent blocks have integral multiplicity from the outset, thus the full- versus half-hypermultiplet rule is inherited by every projection.

We emphasize that the drone-containing projections introduced here are not themselves regarded as admissible physical supergravity models: section~\ref{sec: consistency conditions} shows that neutral abelian vector multiplets have a degenerate gauge-kinetic matrix for $n_T=1$. Their role here is purely auxiliary. We relax the drone-free condition for a moment by temporarily allowing inert abelian vectors so as to preserve $n_V=96$ under deletion of a simple factor. Every physical no-drone $(n+1)$-factor spectrum necessarily projects to such $n$-factor spectra satisfying the same local anomaly, gravitational, representation-reality, and unimodular-lattice constraints. Completeness of the projected lower-factor catalogues therefore supplies necessary constraints on the physical higher-factor search.

\subsection*{Starting point: complete one-factor anomaly-free catalogue}

Even though we are ultimately interested in drone-free models, the recursive
assembly procedure starts from the complete catalogue of one-factor
anomaly-free spectra with drones. For a single non-abelian factor $G$, the
fixed $n_V=96$ vector-multiplet budget requires
$M=93-\dim G$ inert $U(1)$ vector multiplets, so that
$G\times U(1)^M\times Sp(1)_R$ has the required total number of vector
multiplets. Solving the anomaly equations \eqref{eqn: one-factor anomaly equations} in a straightforward way using the anomaly vectors listed in Table \ref{tab:local48}, one can derive the complete list of one-factor anomaly-free spectra. The resulting one-factor catalogue is summarized in
Table~\ref{tab:n1-one-factor-catalogue}.

\begin{table}[H]
\centering
\begin{tabular}{lrr}
\toprule
Group & Branches & Spectra \\
\midrule
$A_3$ & 3 & 187 \\
$A_4$ & 2 & 12 \\
$A_5$ & 2 & 44 \\
$A_6$ & 1 & 7 \\
$A_7$ & 2 & 5 \\
$B_3$ & 3 & 8 \\
$B_4$ & 2 & 4 \\
$C_2$ & 7 & 279 \\
$C_3$ & 2 & 43 \\
$C_4$ & 2 & 12 \\
$C_5$ & 1 & 2 \\
$C_6$ & 1 & 1 \\
$D_4$ & 2 & 27 \\
$D_5$ & 3 & 3 \\
$D_7$ & 1 & 1 \\
$G_2$ & 3 & 10 \\
$E_6$ & 1 & 1 \\
\bottomrule
\end{tabular}
\caption{The one-factor anomaly-free spectra with drones used as the seed
catalogue for the spectra--assembly method.  The ``Branches''
column counts the distinct one-factor anomaly-vector classes for each simple gauge
factor, totaling 38, while ``Spectra'' gives the number of spectra in those branches, summing up to 646.}
\label{tab:n1-one-factor-catalogue}
\end{table}

Altogether, the one-factor seed catalogue contains $646$ spectra distributed
over $38$ anomaly branches.  These drone-containing spectra are not retained
as acceptable final theories; rather, they provide the lower-factor data from
which higher-factor spectra are assembled.  In particular, a candidate at the
next level can survive only if each of its projections appears in
the corresponding lower-factor catalogue.

\begingroup
\subsection*{Constructing two-and-higher factor catalogues}
\label{subsec: two-factor catalogue}
Given a pair of one-factor anomaly-free spectra for $G_1$ and $G_2$,
\begin{equation}
\left\{
\begin{aligned}
& m_1\mathbf{R}_1 \oplus m_2\mathbf{R}_2 \oplus \cdots \oplus m_t\mathbf{R}_t \\
& m'_1\mathbf{R}'_1 \oplus m'_2\mathbf{R}'_2 \oplus \cdots \oplus m'_s\mathbf{R}'_s,
\end{aligned}
\right.
\label{eqn:pair of one-factor spectra}
\end{equation}
a two-factor parent spectrum with these prescribed one-factor projections has the general form\footnote{In the language of \cite{Becker:2023zyb}, this means there exist links connecting these two nodes.}
\begin{equation}
    m_{11}(\mathbf{R}_1,\mathbf{R}'_1)
    \oplus m_{12}(\mathbf{R}_1,\mathbf{R}'_2)
    \oplus \cdots
    \oplus m_{1s}(\mathbf{R}_1,\mathbf{R}'_s)
    \oplus m_{21}(\mathbf{R}_2,\mathbf{R}'_1)
    \oplus \cdots
    \oplus m_{ts}(\mathbf{R}_t,\mathbf{R}'_s),
    \label{eqn:n=2 general form}
\end{equation}
where $m_{ij}$ are the physical multiplicities of the corresponding irreps and obey the full-hypermultiplet reality rule, the unprimed irreps are the ones associated to $G_1$ while the primed irreps are associated to $G_2$.

Projecting the two-factor model into a pair of one-factor models, whose spectra take the form \eq{eqn:pair of one-factor spectra} with 
\begin{equation}
\left\{
\begin{aligned}
&m_1=m_{11}\mathrm{dim}\mathbf{R}'_1+\cdots+m_{1s}\mathrm{dim}\mathbf{R}'_s,
~~\cdots~~,
m_t=m_{t1}\mathrm{dim}\mathbf{R}'_1+\cdots+m_{ts}\mathrm{dim}\mathbf{R}'_s,\\
&m'_1=m_{11}\mathrm{dim}\mathbf{R}_1+\cdots+m_{t1}\mathrm{dim}\mathbf{R}_t,
~~\cdots~~,
m'_s=m_{1s}\mathrm{dim}\mathbf{R}_1+\cdots+m_{ts}\mathrm{dim}\mathbf{R}_t.
\end{aligned}
\right.
\label{eqn:solve for x_ij}
\end{equation}
Solving the $t+s$ equations \eqref{eqn:solve for x_ij} for the physical variables $m_{ij}$ gives the {\it candidate} $n=2$ parent spectra that reduce to the two prescribed $n=1$ spectra upon projection onto either gauge factor.\footnote{For the exact integer implementation one sets $X_{ij}=2m_{ij}$.  The $X_{ij}$ are nonnegative integers, even for real or complex parent blocks and unrestricted in parity for pseudoreal parent blocks.}  Satisfying \eqref{eqn:solve for x_ij} alone is not sufficient to guarantee that a candidate spectrum is anomaly-free. One must additionally impose the mixed-anomaly condition \eq{eq:general-local-anomaly-equations}
\begin{equation}
b_i\cdot b_j=\lambda_i\lambda_j M_{ij}.
\end{equation}
The quantity $M_{ij}$ on the right-hand side can be computed for the candidate two-factor spectrum given by the solution $m_{ij}$, while the left-hand side is already fixed by the anomaly coefficients of the chosen pair of one-factor spectra. A candidate therefore defines a formal anomaly-free $n=2$ spectrum if and only if the two sides agree.

Once the complete set of local anomaly-free $n=1$ spectra for all simple gauge groups is known, all local anomaly-free $n=2$ spectra can in principle be generated by pairing the one-factor spectra and imposing the corresponding consistency conditions. The construction can then be iterated: by combining $n=1$ and $n=2$ spectra, one can generate the $n=3$ catalogue, and similarly proceed to spectra with an arbitrary number of gauge factors. We provide the full catalogue of these formal anomaly-free spectra (with drones) as part of the ancillary materials\cite{github}, in which the gauge group, anomaly-vector classes, number of drones and matter spectra are included.

\subsection*{Comparison with ``brute-force'' enumeration}

It is useful to compare the size of this assembly problem with a direct
(``brute-force'') solution of the anomaly equations.  For a fixed
\(n\)-factor gauge algebra, let \(r_i\) denote the number of irreps of
\(G_i\) retained in the representation menu (including the trivial
representation when it is used to describe matter charged only under
the other factors).  A direct product-representation ansatz can then
contain, before any anomaly equation is imposed, as many as
\begin{equation}
  V_{\rm brute}\sim \prod_{i=1}^n r_i
  \label{eq:bruteforce-variable-count}
\end{equation}
independent multiplicity variables.  The precise number is smaller when
representation-reality restrictions or dimension bounds remove product
irreps, but the multiplicative growth with \(n\) is the essential
point.  Even for a modest common menu size \(r_i\sim6\)--\(10\), this
corresponds parametrically to \(O(10^2)\) variables for \(n=2\),
\(O(10^3\!-\!10^4)\) for \(n=4\), and
\(O(10^5\!-\!10^8)\) for \(n=6\)--\(8\).

For a fixed anomaly-vector assignment, these variables are constrained
by the gravitational equation, the pure-factor \(A_i,B_i,C_i\)
conditions (and a Pfaffian condition when \(G_i=D_4\)), and all pairwise
mixed-anomaly equations.  Ignoring the occasional Pfaffian equation,
the nominal number of equations, before removing identically trivial or dependent ones, is therefore
\begin{equation}
 E_{\rm brute}
 =
 1+3n+\binom{n}{2},
 \label{eq:bruteforce-equation-count}
\end{equation}
namely \(8,13,19,34,53\) equations for
\(n=2,3,4,6,8\), respectively.  The difficulty is not primarily the
number of equations, which grows only quadratically, but the rapidly
growing number of integer variables and the need to repeat
the calculation over the \(235{,}684\) candidate anomaly-vector
assignments of Table~\ref{tab:76sectors}.

The spectra--assembly problem is much more constrained.  For example,
let two selected one-factor spectra contain \(t\) and \(s\) occupied
irreps.  Their most general two-factor parent contains at most
\begin{equation}
  V_{\rm ass}=ts
\end{equation}
physical link multiplicities \(m_{ij}\).  These variables are not free:
the given single-factor-spectra multiplicities impose the \(t+s\) linear equations \eqref{eqn:solve for x_ij}, after which only the
mixed-anomaly condition has to be checked explicitly.  Thus the pure
\(G_1\), pure \(G_2\), gravitational, representation-reality, and
one-factor lattice constraints have already been solved once and for
all in the seed catalogue; they are not re-solved for every parent model.
Moreover, \(s\) and \(t\) count only irreps actually occupied in the
chosen projections and are therefore generally much smaller than the full
menu sizes \(r_1,r_2\).  The matching equations also give immediate
upper bounds on every \(m_{ij}\), which sharply reduces the reconstruction
search.

There are
\begin{equation}
 \frac{646\times647}{2}=208{,}981
\end{equation}
unordered pairs of entries in the complete one-factor seed catalogue,
but most are rejected by group, branch, dimension, or pair-compatibility
conditions before a nontrivial integer reconstruction is attempted.
At higher factor number the gain is more pronounced: the direct method
starts anew from an \(n\)-fold product-representation menu, whereas the
assembly method introduces only links compatible with already solved
lower-factor spectra.  In this sense the method replaces one very large
Diophantine system for each global anomaly-vector candidate by a large
number of small, strongly constrained reconstruction problems.
This is the principal computational reason that the exhaustive
classification is tractable in Mathematica.

\endgroup

\subsection{Completeness of the spectra--assembly construction}
\label{sec:shadow-completeness}

We further clarify why under the assumptions used throughout the scan, the
spectra--assembly construction is exhaustive rather than merely a search
heuristic. The argument is inductive in the number \(n\) of nonabelian
simple factors. 

\subsection*{The two-factor case}

The logical content of the completeness argument is clearest for
\(n=2\). Let
\begin{equation}
  \mathcal M:
  \qquad
  G_1\times G_2
  \times U(1)^M_{\rm drone}
  \times Sp(1)_R
  \label{eq:two-factor-parent}
\end{equation}
be any formal anomaly-free two-factor model. Its two one-factor projections are
\begin{equation}
  S_1=\pi_2(\mathcal M),
  \qquad
  S_2=\pi_1(\mathcal M).
  \label{eq:two-shadows}
\end{equation}
If the one-factor catalogue is complete, then
\begin{equation}
  S_1,\ S_2 \in \mathscr C_1 ,
\end{equation}
where \(\mathscr C_1\) denotes that catalogue.

The parent spectrum itself supplies a solution of the exact assembly
equations for the pair \((S_1,S_2)\). Indeed, the multiplicities of
bifundamental and singly charged representations in \(\mathcal M\)
reconstruct both projected spectra under \(\pi_1\) and \(\pi_2\),
and they satisfy the required local anomaly, gravitational, and
representation-theoretic constraints by hypothesis.

Suppose now, for contradiction, that there exists a formal anomaly-free
two-factor model \(\mathcal M_\ast\) which is absent from the output of
the spectra--assembly scan. Its projections
\begin{equation}
  S_{1,\ast}=\pi_2(\mathcal M_\ast),
  \qquad
  S_{2,\ast}=\pi_1(\mathcal M_\ast)
\end{equation}
must belong to the complete one-factor catalogue. But the actual
multiplicities of \(\mathcal M_\ast\) provide an admissible integer
solution of the assembly problem for the compatible pair
\((S_{1,\ast},S_{2,\ast})\). Therefore an exhaustive solution of that
assembly problem must reproduce \(\mathcal M_\ast\), contradicting the
assumption that it was missing.

The crucial point is that uniqueness is not required. Several distinct
two-factor parents may have the same pair of projections. Completeness
requires only that the assembly step enumerate the entire fiber
\begin{equation}
  \pi^{-1}(S_1,S_2)
\end{equation}
for every compatible projection pair.

\paragraph{Completeness proposition.}
The construction described above has the following properties:
\begin{enumerate}
  \item the bounded one-factor systems are solved exhaustively over the
        stated representation domain, yielding the complete base catalogue
        \(\mathscr C_1\);
  \item every projection of an anomaly-free \(n\)-factor model onto an
        \((n-1)\)-factor spectrum preserves all conditions defining membership
        in \(\mathscr C_{n-1}\);
  \item for every compatible tuple of \((n-1)\)-factor spectra, the
        assembly equations are solved exhaustively over all allowed integer
        and half-integer hypermultiplet multiplicities, with no truncation
        beyond the stated scan assumptions.
\end{enumerate}
These properties imply that the spectra--assembly construction is
complete for every \(n\): every model in \(\mathscr C_n\) appears in the
assembled catalogue.

\paragraph{Proof.}
The base case \(n=1\) is established by the exhaustive one-factor scan:
by construction, every admissible one-factor spectrum is contained in
\(\mathscr C_1\). Suppose now that the construction is complete for
\(n-1\), and let
\begin{equation}
  \mathcal M\in\mathscr C_n
\end{equation}
be any anomaly-free \(n\)-factor model. Projecting out each simple factor
gives the \(n\)-tuple of projected spectra
\begin{equation}
  \bigl(
  \pi_1(\mathcal M),\ldots,\pi_n(\mathcal M)
  \bigr).
  \label{eq:shadow-tuple}
\end{equation}
By the projection property and the induction hypothesis, every member
of this tuple belongs to the complete \((n-1)\)-factor catalogue.

The multiplicities of the parent model \(\mathcal M\) furnish a solution
of the assembly equations associated with the compatible tuple
\eqref{eq:shadow-tuple}. Since the assembly step enumerates all such
solutions, \(\mathcal M\) must occur among the reconstructed
\(n\)-factor models. Hence no anomaly-free \(n\)-factor model lies outside
the assembled catalogue. This establishes the induction step and,
together with the exhaustive one-factor scan, proves completeness for
all \(n\).

The accompanying Mathematica implementation realizes each stage of this induction using exact integer arithmetic; its resulting catalogue is summarized in Table~\ref{tab:nV96-counting} and supplied as ancillary material.

\subsection{Summary of results}
In the enlarged search space that allows drone vector multiplets, the spectra--assembly scan yields \(4{,}094\) formal anomaly- and lattice-consistent spectra. Their distribution according to the number \(n\) of simple non-\(R\) gauge factors is summarized
in Table~\ref{tab:nV96-counting}.

\begin{table}[ht]
\centering
\begin{tabular}{c r}
\hline
Number of simple non-\(R\) factors \(n\)
& Number of spectra \\
\hline
\(1\)        & \(646\)   \\
\(2\)        & \(1{,}968\) \\
\(3\)        & \(1{,}092\) \\
\(4\)        & \(388\)   \\
\(n\geq 5\)  & \(0\)     \\
\hline
Total        & \(4{,}094\) \\
\hline
\end{tabular}
\caption{Numbers of formal anomaly- and lattice-consistent spectra obtained from the
spectra--assembly scan, organized by the number of simple non-\(R\)
gauge factors.}
\label{tab:nV96-counting}
\end{table}

Note that the five-factor scan is empty. Since any model with more than five simple
factors would have a nonempty five-factor projection, this also excludes all
\(n>5\) spectra. Moreover, every one of the \(4{,}094\) spectra contains at least one
drone vector multiplet. Thus, after imposing the drone-free condition in the end, the \(n_V=96\) no-drone semisimple branch does not yield any consistent model, assuming that  the non-\(R\) gauge algebra contains no \(U(1)\), \(A_1\) and \(A_2\) factors.

\section{The $A_1^3 \times Sp(1)_R$ models}
\label{sec:nV12}

\subsection{Lattice reconstruction and the 120 candidate classes}
\label{sec:a1-classes}

In this $n_V=12$ branch one has $n_H=256$. For simplicity, we restrict the matter representation menu to singlets, doublets, and triplets of each \(A_1\) factor, with at most one triplet allowed in any product representation. In other words, the charged representation domain consists of blocks of the form
\begin{equation}
\mathcal R_\mu=(d_{\mu1},d_{\mu2},d_{\mu3}),
\qquad
d_{\mu i}\in\{1,2,3\},
\qquad
\sum_{i=1}^{3}\delta_{d_{\mu i},\,3}\le1 .
\label{eq:A1x3-menu}
\end{equation}
We use the off-diagonal presentation of the unimodular string-charge lattice,
\begin{equation}
 \eta=\begin{pmatrix}0&1\\[1mm]1&0\end{pmatrix},
 \qquad a=(-2,-2), \qquad b_R=(-2,2), \qquad n_H=256,
 \label{eq:a1-background-data}
\end{equation}
so that $x\cdot y=x_1y_2+x_2y_1$.  For each $A_1$ factor define
\begin{equation}
 p_i:=a\cdot b_i,
 \qquad
 q_i:=b_R\cdot b_i.
 \label{eq:a1-pq-definition}
\end{equation}
The mixed $Sp(1)_R$ anomaly fixes $q_i=-2$.  Writing $b_i=(u_i,v_i)$,
the two scalar products therefore give
\begin{equation}
 -2(u_i+v_i)=p_i,
 \qquad
 2(u_i-v_i)=-2.
 \label{eq:a1-reconstruction-equations}
\end{equation}
Putting $r_i:=u_i$, integral lattice reconstruction yields
\begin{equation}
 b_i=(r_i,r_i+1),
 \qquad
 p_i=-4r_i-2,
 \qquad
 q_i=-2,
 \qquad
 b_i^2=2r_i(r_i+1),
 \qquad
 r_i\in\mathbb Z.
 \label{eq:a1-ri-parametrization}
\end{equation}
The active one-factor representation menu contains the doublet
$\mathbf2$ and the adjoint $\mathbf3$, whose trace coefficients are
\begin{equation}
 (A_{\mathbf2},C_{\mathbf2})=
 \left(1,\frac12\right),
 \qquad
 (A_{\mathbf3},C_{\mathbf3})=(4,8).
 \label{eq:a1-trace-coefficients}
\end{equation}
Let $m_{2,i}$ and $m_{3,i}$ denote the effective physical multiplicities seen by
the $i$th factor, in full-hypermultiplet units.  The one-factor
anomaly targets are
\begin{equation}
 \widehat A_i=4-6p_i=16+24r_i,
 \qquad
 \widehat C_i=8+3b_i^2=8+6r_i(r_i+1).
 \label{eq:a1-local-targets}
\end{equation}

The charged representation domain is as described in \eq{eq:A1x3-menu}. 
Let $m_\mu$ denote the physical multiplicity of the block
$\mathcal R_\mu$, measured in full-hypermultiplet units.  Because the $A_1$
doublet is pseudoreal whereas the singlet and triplet are real, the full
product representation is pseudoreal precisely when it contains an odd
number of doublet factors.  The physical SMW rule is therefore
\begin{equation}
 m_\mu\in\frac12\mathbb Z_{\geq0}
 \quad\text{if $\mathcal R_\mu$ is pseudoreal},
 \qquad
 m_\mu\in\mathbb Z_{\geq0}
 \quad\text{if $\mathcal R_\mu$ is real}.
 \label{eq:a1-full-smw-rule}
\end{equation}
If $s$ denotes the number of neutral full hypermultiplets, the gravitational
condition is
\begin{equation}
 \sum_\mu \dim(\mathcal R_\mu)m_\mu+s=256,
 \qquad
 s\in\mathbb Z_{\geq0}.
 \label{eq:a1-gravitational-equation}
\end{equation}

For later use, define the effective physical doublet and triplet contents of
the $i$-th factor by
\begin{align}
 D_i&:=\sum_{\mu:\,d_{\mu i}=2}
 m_\mu\prod_{\ell\neq i}d_{\mu\ell},
 \label{eq:a1-Di-definition}\\
 T_i&:=\sum_{\mu:\,d_{\mu i}=3}
 m_\mu\prod_{\ell\neq i}d_{\mu\ell}.
 \label{eq:a1-Ti-definition}
\end{align}
Consequently the anomaly equations \eqref{eq:anomaly-targets} are
\begin{align}
D_i+4T_i&=\widehat A_i=16+24r_i,
 \label{eq:a1-local-A-equation}\\
 \frac12D_i+8T_i&=\widehat C_i
 =8+6r_i(r_i+1).
 \label{eq:a1-local-C-equation}
\end{align}
Solving them gives
\begin{equation}
T_i=r_i(r_i-1),
\qquad
D_i=4\bigl(4+7r_i-r_i^2\bigr).
 \label{eq:a1-effective-multiplicities}
\end{equation}
To derive the candidate anomaly-vector classes, we can first restrict ourselves to the single-factor case. Then the nonnegativity of LHS of Equation \eqref{eq:a1-effective-multiplicities} together with the necessary dimension bound $n_H=256$, implies
\begin{equation}
2D_i+3T_i\leq256, \qquad
 r_i\in\{0,1,2,3,4,5,6,7\}.
 \label{eq:a1-eight-local-classes}
\end{equation}
For the multi-factor case, one can use the idea of spectra-assembly illustrated in Section \ref{sec: nV=96 branch} to show that no additional anomaly-vector classes can arise beyond those identified above.

The mixed equation may be written as
\begin{equation}
 P_{ij}:=\sum_\mu m_\mu
 A_{d_{\mu i}}A_{d_{\mu j}}
 \prod_{\ell\neq i,j}d_{\mu\ell}
 =b_i\cdot b_j,
 \qquad
 (A_1,A_2,A_3)=(0,1,4).
 \label{eq:a1-Pij-definition}
\end{equation}

Since the three $A_1$ factors are identical, a candidate global class has
a unique ordered representative
\begin{equation}
 0\leq r_1\leq r_2\leq r_3\leq7.
 \label{eq:a1-ordered-triple}
\end{equation}
The number of such multisets is
\begin{equation}
 N_{\mathrm{candidate}}=\binom{8+3-1}{3}=120.
 \label{eq:a1-120-count}
\end{equation}
Table~\ref{tab:a1-all-120} lists them.  A dagger marks a class which does not admit matter spectra, which will be proved in section~\ref{sec:a1-empty-fibers}.

\begin{table}[p]
\centering
\begingroup
\scriptsize
\setlength{\tabcolsep}{2pt}
\renewcommand{\arraystretch}{1.05}

\begin{tabular}{*{5}{r c @{\hspace{0.10in}}} r c}
\toprule
\multicolumn{12}{c}{\textit{Nonempty classes}}\\
\midrule
\# & $(r_1,r_2,r_3)$
& \# & $(r_1,r_2,r_3)$
& \# & $(r_1,r_2,r_3)$
& \# & $(r_1,r_2,r_3)$
& \# & $(r_1,r_2,r_3)$
& \# & $(r_1,r_2,r_3)$\\
\midrule
1  & $(0,0,0)$ & 18 & $(0,2,4)$ & 35 & $(1,1,1)$
& 52 & $(1,3,7)$ & 69 & $(2,3,5)$ & 86  & $(3,4,5)$ \\

2  & $(0,0,1)$ & 19 & $(0,2,5)$ & 36 & $(1,1,2)$
& 53 & $(1,4,4)$ & 70 & $(2,3,6)$ & 87  & $(3,4,6)$ \\

3  & $(0,0,2)$ & 20 & $(0,2,6)$ & 37 & $(1,1,3)$
& 54 & $(1,4,5)$ & 71 & $(2,3,7)$ & 88  & $(3,4,7)$ \\

4  & $(0,0,3)$ & 21 & $(0,2,7)$ & 38 & $(1,1,4)$
& 55 & $(1,4,6)$ & 72 & $(2,4,4)$ & 89  & $(3,5,5)$ \\

5  & $(0,0,4)$ & 22 & $(0,3,3)$ & 39 & $(1,1,5)$
& 56 & $(1,4,7)$ & 73 & $(2,4,5)$ & 90  & $(3,5,6)$ \\

6  & $(0,0,5)$ & 23 & $(0,3,4)$ & 40 & $(1,1,6)$
& 57 & $(1,5,5)$ & 74 & $(2,4,6)$ & 91  & $(3,5,7)$ \\

7  & $(0,0,6)$ & 24 & $(0,3,5)$ & 41 & $(1,1,7)$
& 58 & $(1,5,6)$ & 75 & $(2,4,7)$ & 92  & $(3,6,6)$ \\

8  & $(0,0,7)$ & 25 & $(0,3,6)$ & 42 & $(1,2,2)$
& 59 & $(1,5,7)$ & 76 & $(2,5,5)$ & 93  & $(4,4,4)$ \\

9  & $(0,1,1)$ & 26 & $(0,3,7)$ & 43 & $(1,2,3)$
& 60 & $(1,6,6)$ & 77 & $(2,5,6)$ & 94  & $(4,4,5)$ \\

10 & $(0,1,2)$ & 27 & $(0,4,4)$ & 44 & $(1,2,4)$
& 61 & $(2,2,2)$ & 78 & $(2,5,7)$ & 95  & $(4,4,6)$ \\

11 & $(0,1,3)$ & 28 & $(0,4,5)$ & 45 & $(1,2,5)$
& 62 & $(2,2,3)$ & 79 & $(2,6,6)$ & 96  & $(4,4,7)$ \\

12 & $(0,1,4)$ & 29 & $(0,4,6)$ & 46 & $(1,2,6)$
& 63 & $(2,2,4)$ & 80 & $(3,3,3)$ & 97  & $(4,5,5)$ \\

13 & $(0,1,5)$ & 30 & $(0,4,7)$ & 47 & $(1,2,7)$
& 64 & $(2,2,5)$ & 81 & $(3,3,4)$ & 98  & $(4,5,6)$ \\

14 & $(0,1,6)$ & 31 & $(0,5,5)$ & 48 & $(1,3,3)$
& 65 & $(2,2,6)$ & 82 & $(3,3,5)$ & 99  & $(4,5,7)$ \\

15 & $(0,1,7)$ & 32 & $(0,5,6)$ & 49 & $(1,3,4)$
& 66 & $(2,2,7)$ & 83 & $(3,3,6)$ & 100 & $(4,6,6)$ \\

16 & $(0,2,2)$ & 33 & $(0,5,7)$ & 50 & $(1,3,5)$
& 67 & $(2,3,3)$ & 84 & $(3,3,7)$ & 101 & $(5,5,5)$ \\

17 & $(0,2,3)$ & 34 & $(0,6,6)$ & 51 & $(1,3,6)$
& 68 & $(2,3,4)$ & 85 & $(3,4,4)$ & 102 & $(5,5,6)$ \\
\bottomrule
\end{tabular}

\vspace{0.65em}

\begin{tabular}{*{5}{r c @{\hspace{0.10in}}} r c}
\toprule
\multicolumn{12}{c}{\textit{Empty classes}}\\
\midrule
\# & $(r_1,r_2,r_3)$
& \# & $(r_1,r_2,r_3)$
& \# & $(r_1,r_2,r_3)$
& \# & $(r_1,r_2,r_3)$
& \# & $(r_1,r_2,r_3)$
& \# & $(r_1,r_2,r_3)$\\
\midrule
103 & $(0,6,7)^\dagger$
& 106 & $(1,7,7)^\dagger$
& 109 & $(3,6,7)^\dagger$
& 112 & $(4,7,7)^\dagger$
& 115 & $(5,6,7)^\dagger$
& 118 & $(6,6,7)^\dagger$ \\

104 & $(0,7,7)^\dagger$
& 107 & $(2,6,7)^\dagger$
& 110 & $(3,7,7)^\dagger$
& 113 & $(5,5,7)^\dagger$
& 116 & $(5,7,7)^\dagger$
& 119 & $(6,7,7)^\dagger$ \\

105 & $(1,6,7)^\dagger$
& 108 & $(2,7,7)^\dagger$
& 111 & $(4,6,7)^\dagger$
& 114 & $(5,6,6)^\dagger$
& 117 & $(6,6,6)^\dagger$
& 120 & $(7,7,7)^\dagger$ \\
\bottomrule
\end{tabular}

\endgroup
\medskip
\caption{The 120 candidate anomaly-vector classes for
\(A_1^3\times Sp(1)_R\). The first 102 entries are the nonempty classes,
numbered in the same order as the representative-spectrum table; the
final eighteen daggered entries do not admit corresponding spectra.}
\label{tab:a1-all-120}
\end{table}

\subsection{Analytic proof of the eighteen empty fibers}
\label{sec:a1-empty-fibers}

We now derive two necessary inequalities for a nonnegative solution of the
full matter system.  Since the argument only uses $m_\mu\geq0$, it applies in
particular to the full SMW domain of Eq.~\eqref{eq:a1-full-smw-rule}.

\begin{lemma}[Pairwise bound]
Every spectrum in the representation domain
\eqref{eq:A1x3-menu} satisfies
\begin{equation}
 3P_{ij}\leq4(D_i+D_j)
 \label{eq:a1-pairwise-bound}
\end{equation}
for every pair $i\neq j$.
\end{lemma}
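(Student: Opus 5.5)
The plan is to compare the two sides block by block. Both $P_{ij}$ and $D_i+D_j$ are nonnegative linear combinations of the multiplicities $m_\mu$. It therefore suffices to show that each block obeys
\begin{equation*}
3A_{d_{\mu i}}A_{d_{\mu j}}\prod_{\ell\ne i,j}d_{\mu\ell}\;\le\;4\Bigl(\delta_{d_{\mu i},2}\prod_{\ell\ne i}d_{\mu\ell}+\delta_{d_{\mu j},2}\prod_{\ell\ne j}d_{\mu\ell}\Bigr).
\end{equation*}
Multiplying each block inequality by $m_\mu\ge0$ and summing then gives $3P_{ij}\le4(D_i+D_j)$ directly. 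The argument uses only $m_\mu\ge 0$, so the SMW half-integrality rule plays no role.

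Write $k:=\prod_{\ell\ne i,j}d_{\mu\ell}$, the dimension of the block in the third slot, and cancel this common positive factor from both sides. The block inequality then reduces to a statement about the pair $(d_{\mu i},d_{\mu j})$ alone:
\begin{equation*}
3A_{d_{\mu i}}A_{d_{\mu j}}\le4\bigl(\delta_{d_{\mu i},2}\,d_{\mu j}+\delta_{d_{\mu j},2}\,d_{\mu i}\bigr),\qquad (A_1,A_2,A_3)=(0,1,4).
\end{equation*}
The menu \eqref{eq:A1x3-menu} allows at most one triplet per block, so $(3,3)$ is excluded. I will check the remaining cases by hand.

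\emph{A singlet in either slot.} The left side vanishes and the right side is nonnegative.

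\emph{The pair $(2,2)$.} The left side is $3$ and the right side is $4(2+2)=16$.

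\emph{The pair $(2,3)$, or $(3,2)$ by symmetry.} The left side is $3\cdot1\cdot4=12$. Only the doublet slot contributes to the right side, giving $4\cdot3=12$. Equality holds here, so this case is the tight one and fixes the constant $3/4$ in \eqref{eq:a1-pairwise-bound}.

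The only real obstacle is that the case split must be genuinely exhaustive. This depends on the at-most-one-triplet restriction: a $(3,3)$ block would contribute $48k$ to the left side and nothing to the right, which would break the bound. I will state explicitly that this is the point where the representation domain \eqref{eq:A1x3-menu} enters. Otherwise the proof is the short finite check above.
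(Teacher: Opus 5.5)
Your proposal is correct and follows the paper's own proof. Both check the inequality block by block using only $m_\mu\ge0$. Singlet slots contribute nothing to $P_{ij}$, a doublet--doublet block gives $3\le16$, and an adjoint--doublet block saturates $12\le12$, all per unit of $m_\mu d_k$. The at-most-one-triplet restriction is exactly what excludes the offending adjoint--adjoint block.
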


\begin{proof}
It suffices to check one product representation at a time.  A nonzero
contribution to $P_{ij}$ can only come from a doublet--doublet or an
adjoint--doublet pair, because two adjoints in one product representation
are excluded.  If $d_k$ is the dimension under the remaining factor, a
doublet--doublet block contributes
\begin{equation}
 \Delta P_{ij}=m_\mu d_k,
 \qquad
 \Delta D_i+\Delta D_j=4m_\mu d_k.
 \label{eq:a1-DD-contribution}
\end{equation}
An adjoint--doublet block contributes
\begin{equation}
 \Delta P_{ij}=4m_\mu d_k,
 \qquad
 \Delta D_i+\Delta D_j=3m_\mu d_k.
 \label{eq:a1-AD-contribution}
\end{equation}
Both obey $3\Delta P_{ij}\leq4(\Delta D_i+\Delta D_j)$.
Blocks with a singlet under either member of the pair have
$\Delta P_{ij}=0$.  Summing the blockwise inequality with nonnegative
coefficients $m_\mu$ proves Eq.~\eqref{eq:a1-pairwise-bound}.
\end{proof}

\begin{lemma}[Three-pair bound]
Every spectrum in the same domain satisfies
\begin{equation}
 12(P_{12}+P_{13}+P_{23})
 \leq19(D_1+D_2+D_3).
 \label{eq:a1-global-bound-P}
\end{equation}
\end{lemma}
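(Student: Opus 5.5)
The plan is to argue blockwise, exactly as in the proof of the pairwise bound. Both sides of \eqref{eq:a1-global-bound-P} are linear in the multiplicities $m_\mu\ge 0$, so it suffices to show that every single product representation $\mathcal R_\mu$ allowed by \eqref{eq:A1x3-menu} satisfies
\begin{equation*}
12\,\bigl(\Delta P_{12}+\Delta P_{13}+\Delta P_{23}\bigr)\le 19\,\bigl(\Delta D_1+\Delta D_2+\Delta D_3\bigr).
\end{equation*}
Summing with the nonnegative weights $m_\mu$ then gives the lemma. Both sides are symmetric under permutations of the three $A_1$ factors, so it is enough to check one ordering of each block type. The menu is finite, so this reduces to a short table check using \eqref{eq:a1-Pij-definition} and the definition \eqref{eq:a1-Di-definition} of $D_i$.

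The block types, up to permutation, are $(1,1,1)$, $(2,1,1)$, $(3,1,1)$, $(2,2,1)$, $(3,2,1)$, $(2,2,2)$ and $(3,2,2)$, since at most one triplet is allowed. I will compute the contributions per unit multiplicity.

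\begin{itemize}
\item[$\circ$] Any block with at most one non-singlet entry has $\Delta P=0$, so the inequality is trivial.
\item[$\circ$] $(2,2,1)$: $\Delta P=1$ and $\Delta D=2+2=4$, giving $12\le 76$.
\item[$\circ$] $(3,2,1)$: $\Delta P_{12}=4$ and $\Delta D_2=3$, giving $48\le 57$.
\item[$\circ$] $(2,2,2)$: each pair contributes $2$, so $\Delta P=6$; each $D_i$ gets $4$, so $\Delta D=12$. This gives $72\le 228$.
\item[$\circ$] $(3,2,2)$: $\Delta P_{12}=\Delta P_{13}=4\cdot1\cdot2=8$ and $\Delta P_{23}=1\cdot1\cdot3=3$, so $\Delta P=19$. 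Also $\Delta D_2=\Delta D_3=3\cdot 2=6$, so $\Delta D=12$. This gives $228\le 228$.
\end{itemize}

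The only step requiring care is the last case. It saturates the inequality, and this fixes the constant ratio $19/12$: no smaller ratio would work, so the bound is sharp. I would double-check this case, in particular that the triplet factor contributes nothing to $D_1$ while multiplying the doublet counts of the other two factors by $3$. Apart from that, the proof is just the enumeration above together with the remark that the pseudoreal half-integer multiplicities of \eqref{eq:a1-full-smw-rule} play no role, since only $m_\mu\ge0$ is used.
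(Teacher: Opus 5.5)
Your proposal is correct and is the paper's argument: a blockwise check over the block types allowed by the menu, then summing with weights $m_\mu\ge0$. In both, $(3,2,2)$ saturates the ratio $19/12$ and the other types give smaller or vanishing contributions; your per-block computations (all of which check out) simply spell out what the paper states more tersely.
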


\begin{proof}
Again the inequality can be checked block by block.  Up to permutations, the
only relevant representation types are
\begin{equation}
 (2,1,1),\quad(2,2,1),\quad(2,2,2),\quad
 (3,1,1),\quad(3,2,1),\quad(3,2,2).
 \label{eq:a1-block-types}
\end{equation}
When $\sum_i\Delta D_i\neq0$, the largest value of
$\sum_{i<j}\Delta P_{ij}/\sum_i\Delta D_i$ occurs for a permutation of
$(3,2,2)$.  For this block,
\begin{equation}
 \Delta P_{12}+\Delta P_{13}+\Delta P_{23}=19m_\mu,
 \qquad
 \Delta D_1+\Delta D_2+\Delta D_3=12m_\mu.
 \label{eq:a1-322-ratio}
\end{equation}
The remaining types give a smaller ratio; types with vanishing denominator
also have vanishing numerator.  Thus
\begin{equation}
 \sum_{i<j}\Delta P_{ij}
 \leq\frac{19}{12}\sum_i\Delta D_i.
 \label{eq:a1-blockwise-global-bound}
\end{equation}
Summing over all occupied blocks proves Eq.~\eqref{eq:a1-global-bound-P}.
\end{proof}

We first apply the pairwise bound.  From Eq.~\eqref{eq:a1-effective-multiplicities},
\begin{equation}
 D_{r_i=6}=40,
 \qquad
 D_{r_i=7}=16,
 \qquad
 P_{r_i=6,r_j=7}=97,
 \qquad
 P_{r_i=7,r_j=7}=112.
 \label{eq:a1-pairwise-special-values}
\end{equation}
Hence
\begin{align}
 3P_{r_i=6,r_j=7}=291&>224=4(D_{r_i=6}+D_{r_j=7}),
 \label{eq:a1-67-violation}\\
 3P_{r_i=7,r_j=7}=336&>128=4(D_{r_i=7}+D_{r_j=7}).
 \label{eq:a1-77-violation}
\end{align}
Every ordered triple containing a pair $(r_i=6,r_j=7)$ or $(r_i=7,r_j=7)$ is therefore
empty.  In the \((r_i,r_j,r_k)\) labeling, these are
precisely the following fifteen classes:
\begin{align}
 &(0,6,7),(0,7,7),(1,6,7),(1,7,7),(2,6,7),(2,7,7),
 \notag\\
 &(3,6,7),(3,7,7),(4,6,7),(4,7,7),(5,6,7),(5,7,7),
 \notag\\
 &(6,6,7),(6,7,7),(7,7,7).
 \label{eq:a1-fifteen-empty}
\end{align}

The three-pair bound excludes three further classes.  The relevant data are
\begin{equation}
 D_{r_i=5}=56,
 \qquad
 D_{r_i=6}=40,
 \qquad
 D_{r_i=7}=16.
 \label{eq:a1-D567}
\end{equation}
Substitution into Eq.~\eqref{eq:a1-global-bound-P} gives the following exact
violations, in the
$(r_i, r_j , r_k)$ labeling.

\begin{table}[htbp]
\centering
\begin{tabular}{c r r r}
\toprule
class & $\sum_{i<j}P_{ij}$ & $12\sum_{i<j}P_{ij}$
& $19\sum_iD_i$\\
\midrule
$(5,5,7)$ & 224 & 2688 & 2432\\
$(5,6,6)$ & 226 & 2712 & 2584\\
$(6,6,6)$ & 252 & 3024 & 2280\\
\bottomrule
\end{tabular}
\medskip
\caption{The three additional violations of the necessary three-pair bound
\eqref{eq:a1-global-bound-P}.}
\label{tab:a1-global-violations}
\end{table}

We have therefore proved that the empty classes are the fifteen in
Eq.~\eqref{eq:a1-fifteen-empty}, together with
\begin{equation}
 (5,5,7),
 \qquad
 (5,6,6),
 \qquad
 (6,6,6).
 \label{eq:a1-three-additional-empty}
\end{equation}
They are the eighteen daggered entries of Table~\ref{tab:a1-all-120}.

\subsection{Scan strategy and outcome}
\label{sec:a1-outcome}

The analytic inequalities leave 102 candidate classes. The exact scan described below finds at least one spectrum in each of them, so all 102 are nonempty.
Because the representation truncation leaves only \(19\) charged product
irreps, the overhead of the spectra--assembly method provides little advantage
in the \(A_1^3\) branch.  We therefore simply use the search strategy illustrated in Section \ref{sec: a search strategy}. For each fixed ordered class
\(r=(r_1,r_2,r_3)\), the definitions of \(D_i\), \(T_i\), and
\(P_{ij}\) reduce the anomaly equations \eqref{eq:anomaly-targets} to a
\(9\times19\) linear system
$Dm=t(r)$, supplemented by the gravitational condition
\(\sum_{\mu=1}^{19}\dim( R_\mu)m_\mu+s=256\).
Solving these equations for the allowed multiplicities gives all anomaly-free spectra, which will be finally quotiented by permutations
of the identical \(A_1\) factors that preserve the ordered anomaly-vector
class.
After quotienting the stabilizer of its ordered class, the
class-by-class counts give
\begin{equation}
 N_{\mathrm{inequivalent}}=4{,}337{,}331.
 \label{eq:a1-spectrum-counts}
\end{equation}

We then impose the ghost-free condition. All candidate $A_1^3$ anomaly-vector classes admit a common region in tensor-moduli space with positive gauge kinetic terms. In the off-diagonal presentation of U, write $j = (j_1,j_2)$ and choose $j_1 > j_2 > 0$, with an irrelevant overall normalization chosen so that $j$ is a unit positive vector. Then
\be
j \cdot  b_R = 2(j_1-j_2) > 0\ ,
\ee
while, for $b_i = (r_i,r_i+1)$ we have 
\be
j \cdot b_i = (r_i+1)j_1+r_i j_2 > 0\ ,
\ee
for every $r_i \in \{0,...,7\}$. Thus the $Sp(1)_R$ and all three $A_1$ gauge kinetic terms can be positive simultaneously for every one of the $120$ candidate classes, and hence for all $102$ nonempty classes.

Finally, one can show by direct computation that $a=(-2,-2)$ is a characteristic element of the even charge lattice\footnote{Note that all spectra in this section admit an even lattice embedding, but this does not mean that an odd lattice embedding is impossible.}.

\section{The $A_2\times U(1)\times Sp(1)_R$ models}
\label{sec:a2u1}

Allowing hypermultiplets to be charged under abelian factors gives another
dimension-nine non-\(R\) group,
\[
A_2\times U(1).
\]
Here we consider
\begin{equation}
  G=A_2\times U(1)\times Sp(1)_R
\end{equation}
and assume that not all fields are neutral under the ordinary \(U(1)\),
so that this factor is not a drone. To obtain a finite
exhaustive problem, we restrict the \(A_2\) representations and the
charges they can carry to
\begin{equation}
\mathcal R_1=
\{\mathbf 1,\mathbf 3,\overline{\mathbf 3},
\mathbf 6,\overline{\mathbf 6},\mathbf 8\},
\qquad
|q|\leq2 .
\label{eq:a2u1-menu}
\end{equation}
The charge assignment is called primitive when
\begin{equation}
 \gcd\{|q|:\text{a nonzero-charge block is occupied}\}=1.
\label{eq:primitive}
\end{equation}
Consequently, a spectrum supported only at \(|q|=2\) is a nonprimitive
presentation of a charge-one spectrum and is not counted separately.

For an \(SU(3)\) representation \(R\), define
\begin{equation}
\tr_R F_3^2=A_R\,\tr F_3^2,\qquad
\tr_R F_3^4=C_R\bigl(\tr F_3^2\bigr)^2,\qquad
\tr_R F_3^3=E_R\,\tr F_3^3,
\end{equation}
with
\begin{center}
\begin{tabular}{c|cccc}
\toprule
\(R\) & \(\dim R\) & \(A_R\) & \(C_R\) & \(E_R\) \\
\midrule
\(\one\) & 1 & 0 & 0 & 0 \\
\(\three,\thbar\) & 3 & 1 & \(1/2\) & \(\pm1\) \\
\(\six,\sixbar\) & 6 & 5 & \(17/2\) & \(\pm7\) \\
\(\eight\) & 8 & 6 & 9 & 0 \\
\bottomrule
\end{tabular}
\end{center}

We label a physical full hypermultiplet
simply by \(R_q\), with signed charge \(q=-2,-1,0,1,2\), understanding
that its complex symplectic realization is
\begin{equation}
 R_q\oplus\overline R_{-q}.
 \label{eq:a2u1-signed-full-hyper}
\end{equation}
Thus \(R_{-q}\) denotes the opposite orientation.  The two summands in
\eqref{eq:a2u1-signed-full-hyper} are related by the symplectic
Majorana--Weyl reality condition, so a full hypermultiplet labelled by
\(R_q\) contributes \(\dim R\) to \(n_H\).

Let \(m_{R,q}\) be the multiplicity of the full hypermultiplet \(R_q\)
for \(R=\mathbf3,\mathbf6\).  For the charge magnitudes
\(r=1,2\), define
\begin{equation}
 M_{R,r}:=m_{R,r}+m_{R,-r},
 \qquad
 M_{R,0}:=m_{R,0}.
 \label{eq:a2u1-M-signed}
\end{equation}
For the self-conjugate \(A_2\) representations \(\mathbf1,\mathbf8\),
\(n_{1,r}\) and \(m_{8,r}\) count the charge-conjugate full-hyper blocks
at charge magnitude \(r\).  The total hypermultiplet dimension carried
by charge magnitude \(r\) is therefore
\begin{equation}
N_r:=n_{1,r}+3M_{3,r}+6M_{6,r}+8m_{8,r},
\qquad
\sum_{r=0}^2 N_r=256 .
\label{eq:a2u1-Nq}
\end{equation}

The \(A_2\) trace sums are
\begin{align}
\Sigma_A&=
\sum_{r=0}^2
\left(M_{3,r}+5M_{6,r}+6m_{8,r}\right),
\nonumber\\
2\Sigma_C&=
\sum_{r=0}^2
\left(M_{3,r}+17M_{6,r}+18m_{8,r}\right).
\label{SASC}
\end{align}
The coefficient of \(F\,\tr F_3^3\) must vanish separately.  In the
signed-charge notation this condition takes the particularly simple
form
\begin{equation}
{\cal C}_{\rm cubic}
=
\sum_{q=-2}^{2}q
\left(m_{3,q}+7m_{6,q}\right)
=0.
\label{cc}
\end{equation}
Reducing \eqref{cc} modulo two gives
\begin{equation}
\sum_{r=1}^2r\left(M_{3,r}+M_{6,r}\right)
\equiv0\pmod2 .
\label{mod2}
\end{equation}
No pairwise equality \(m_{R,r}=m_{R,-r}\) is imposed; cancellations
between different representations and charge sectors are retained.

With \eqref{cc} imposed, the anomaly polynomial takes the form
\begin{align}
I_8
=&
\bigl(\tr R^2\bigr)^2
+\frac{A_{\mathrm{adj}}-\Sigma_A}{6}\,
 \tr R^2\,\tr F_3^2
-\frac{S_2}{6}\,\tr R^2\,F^2
+\frac{2\Sigma_C-2C_{\mathrm{adj}}}{3}
 \bigl(\tr F_3^2\bigr)^2
\nonumber\\
&\hspace{7mm}
+4S_{Aq^2}\,(\tr F_3^2)F^2
+\frac{2S_4}{3}F^4
-12\,\tr F_3^2\,\tr_{\mathbf2}F_R^2
-16\bigl(\tr_{\mathbf2}F_R^2\bigr)^2,
\label{eq:I8-prefactorized-general}
\end{align}
where
\begin{equation}
S_2:=\sum_{r=0}^2r^2N_r,\qquad
S_4:=\sum_{r=0}^2r^4N_r,
\qquad
S_{Aq^2}:=
\sum_{r=0}^2r^2
\left(M_{3,r}+5M_{6,r}+6m_{8,r}\right).
\label{s24}
\end{equation}
Writing \(I_8\) in Green--Schwarz factorized form gives
\begin{align}
a\cdot a&=8,
&
a\cdot b_R&=0,
&
b_R\cdot b_R&=-8,
\label{av1}\\
a\cdot\bA&=\frac{6-\Sigma_A}{6},
&
\bA^2&=\frac{\Sigma_C-9}{3},
&
\bA\cdot b_R&=-3,
\label{eq:bA-bR}\\
a\cdot\bU&=-\frac{S_2}{6},
&
\bU^2&=\frac{S_4}{3},
&
\bA\cdot\bU&=S_{Aq^2}.
\label{eq:mixed}
\end{align}
Since no hyperino is charged simultaneously under the ordinary \(U(1)\)
and \(Sp(1)_R\),
\begin{equation}
b_{U(1)}\cdot b_R=0.
\label{ave}
\end{equation}

\subsection{Allowed abelian anomaly vectors}

For the local factorization analysis it is convenient to use
\begin{equation}
a=(-2,-2),
\qquad
b_R=(-2,2),
\label{conv2}
\end{equation}
with the off-diagonal bilinear form.  Equation~\eqref{ave} then implies
\begin{equation}
b_{U(1)}=(\beta,\beta).
\end{equation}
The first two equations in \eqref{eq:mixed} give
\begin{equation}
S_2=24\beta,
\qquad
S_4=6\beta^2.
\label{eq:S2S4-beta}
\end{equation}
Because \(|q|\le2\),
\begin{equation}
S_2=N_1+4N_2,
\qquad
S_4=N_1+16N_2,
\end{equation}
and hence
\begin{equation}
N_1=2\beta(16-\beta),
\qquad
N_2=\frac{\beta(\beta-4)}{2}.
\label{eq:N-beta}
\end{equation}
Integrality of \(N_1,N_2\) requires \(\beta\) to be even, while
nonnegativity gives \(0\le\beta\le16\) together with
\(\beta=0\) or \(\beta\ge4\).  In particular, \(\beta=2\) is excluded
because it gives \(N_2=-2\).  Thus, before using the mixed
\(A_2\)-\(U(1)\) constraint,
\begin{equation}
\beta\in\{0,4,6,8,10,12,14,16\},
\qquad
\ell:=\frac{\beta}{2}.
\label{beta}
\end{equation}

Now define
\begin{equation}
p:=a\cdot b_{A_2}.
\label{eq:def-p}
\end{equation}
Using \(b_{A_2}\cdot b_R=-3\) in the local basis
\eqref{conv2} gives
\begin{equation}
b_{A_2}
=
\left(
-\frac{p+3}{4},
\frac{3-p}{4}
\right).
\label{defp}
\end{equation}
Therefore
\begin{equation}
S_{Aq^2}
=
b_{A_2}\cdot b_{U(1)}
=
-\ell p.
\label{eq:SAq2-ellp}
\end{equation}
On the other hand, \eqref{mod2} and \eqref{s24} imply
\begin{equation}
S_{Aq^2}\equiv0\pmod2.
\label{eq:SAq2-even}
\end{equation}

The \(A_2\) representation coefficients satisfy
\begin{equation}
2C_R-A_R\equiv0\pmod{12}
\end{equation}
for every representation in \eqref{eq:a2u1-menu}.  Hence
\begin{equation}
2\Sigma_C-\Sigma_A\equiv0\pmod{12}.
\end{equation}
Using \eqref{eq:bA-bR} and \eqref{eq:def-p},
\begin{equation}
\Sigma_A=6(1-p),
\qquad
2\Sigma_C=18+\frac34(p^2-9),
\label{SASC2}
\end{equation}
so that
\begin{equation}
\frac34(p+1)(p+7)\equiv0\pmod{12}.
\end{equation}
Equivalently,
\begin{equation}
(p+1)(p+7)\equiv0\pmod{16},
\end{equation}
which gives
\begin{equation}
p\equiv\pm1\pmod8.
\label{eq:p-congruence}
\end{equation}
In particular, \(p\) is odd.  Combining
\eqref{eq:SAq2-ellp} and \eqref{eq:SAq2-even}, we then find
\(\ell\in2\mathbb Z\).  Therefore
\begin{equation}
\beta=4k,
\qquad
k\in\mathbb Z.
\label{eq:beta-4k}
\end{equation}
Thus the intermediate possibilities \(\beta=6,10,14\) are eliminated.
Combining \eqref{eq:beta-4k} with \eqref{beta} gives initially
\begin{equation}
k=0,1,2,3,4.
\end{equation}
For these values,
\begin{equation}
b_{U(1)}=-2ka=(4k,4k),
\qquad
S_2=96k,
\qquad
S_4=96k^2,
\end{equation}
and
\begin{equation}
N_1=32k(4-k),
\qquad
N_2=8k(k-1).
\label{eq:N-solutions}
\end{equation}
The \(k=4\) sector has
\((N_1,N_2)=(0,96)\), so every nonzero charge is \(|q|=2\).
It is therefore removed by the primitive normalization
\eqref{eq:primitive}.  The neutral and primitive sectors are consequently
\begin{equation}
k=0,1,2,3.
\label{eq:k-primitive}
\end{equation}
The \(k=0\) sector contains no charged matter and is the formal drone
sector; the no-drone scan retains \(k=1,2,3\).

\subsection{The seven \(A_2\) anomaly-vector classes}

Let
\begin{equation}
M_3:=\sum_{q=0}^2M_{3,q},
\qquad
M_6:=\sum_{q=0}^2M_{6,q},
\qquad
m_8:=\sum_{q=0}^2m_{8,q}.
\end{equation}
Subtracting the two equations in \eqref{SASC2} and using
\eqref{SASC} gives
\begin{equation}
M_6+m_8
=
\frac{(p+1)(p+7)}{16}.
\label{eq:M6m8-p}
\end{equation}
Moreover,
\begin{equation}
\Sigma_A=M_3+5M_6+6m_8
\ge5(M_6+m_8).
\end{equation}
Together with \(\Sigma_A=6(1-p)\), this gives
\begin{equation}
6(1-p)
\ge
\frac{5}{16}(p+1)(p+7).
\label{eq:p-bound}
\end{equation}
The inequality implies, for integral \(p\),
\begin{equation}
-27\le p\le0.
\end{equation}
Combining this finite interval with
\(p\equiv\pm1\pmod8\) gives precisely
\begin{equation}
p\in
\{-25,-23,-17,-15,-9,-7,-1\}.
\label{eq:p-seven}
\end{equation}
Thus the finiteness of the seven \(A_2\) classes follows directly from
the anomaly equations and nonnegativity of the representation
multiplicities.

The anomaly vectors $b_{A_2}$ corresponding to these values of $p$ are
\begin{equation}
 \bA\in
 \left\{
 \left(\frac{11}{2},7\right),
 \left(5,\frac{13}{2}\right),
 \left(\frac72,5\right),
 \left(3,\frac92\right),
 \left(\frac32,3\right),
 \left(1,\frac52\right),
 \left(-\frac12,1\right)
 \right\}.
\label{eq:seven-bA}
\end{equation}
Every value in \eqref{eq:p-seven} occurs for each sector \(k=0,1,2,3\).  Hence there are
\begin{equation}
7\times4=28
\end{equation}
formal anomaly-vector classes.  The seven \(k=0\) classes have a neutral
ordinary \(U(1)\) and are omitted by the drone-free condition, leaving
\(21\) no-drone classes.

\subsection{Unimodular embedding and positivity}

For every $p$ in \eqref{eq:p-seven}, we have
$p\equiv\pm1\pmod8$ and hence $p^2\equiv1\pmod{16}$.  Therefore
\begin{equation}
 b_{A_2}^2
 =2\left(-\frac{p+3}{4}\right)\left(\frac{3-p}{4}\right)
 =\frac{p^2-9}{8}\in2\mathbb Z+1.
\end{equation}
The charge lattice can be given directly in these coordinates.  For
each congruence class of $p$, take $\Lambda_p=\mathbb Z e_p\oplus
\mathbb Z f_p$ with basis
\begin{equation}
\begin{array}{c|cc}
 &e_p&f_p\\ \hline
 p\equiv 1\pmod8
   &\left(1,\frac12\right)&\left(-1,\frac12\right)\\
 p\equiv-1\pmod8
   &\left(\frac12,1\right)&\left(\frac12,-1\right)
\end{array}.
\end{equation}
Both bases satisfy $e_p^2=1$, $f_p^2=-1$, and $e_p\cdot f_p=0$.
Hence $\Lambda_p$ is odd and unimodular. It is then straightforward to verify that all the anomaly vectors lie in this odd unimodular lattice by checking that each has integral pairings with both basis vectors.
One can further show that $a$ is characteristic: for any
$x=m e_p+n f_p\in\Lambda_p$,
\begin{equation}
 a\cdot x-x^2=-3m+n-m^2+n^2\in2\mathbb Z,
\end{equation}
and $b_{U(1)}/2$ is a lattice element: 
\begin{equation}
\frac12 b_{U(1)}
=(2k,2k)
=3k e_p+k f_p
\in\Lambda_p.
\end{equation}

For the ghost-free domain, write $j=(j_1,j_2)$ and impose its unit-norm condition
$2j_1j_2=1$.  For $k=1,2,3$, positivity of the $U(1)$ and
$Sp(1)_R$ gauge kinetic terms requires
\begin{align}
 j\cdot b_{U(1)}&=4k(j_1+j_2)>0,\\
 j\cdot b_R&=2(j_1-j_2)>0.
\end{align}
The first inequality gives $j_1+j_2>0$, while the unit-norm condition
gives $j_1j_2>0$; together these imply $j_1,j_2>0$.  The second
inequality then gives $j_1>j_2$.  For every $p$ in
\eqref{eq:p-seven}, the remaining gauge kinetic coefficient is
automatically positive, since
\begin{equation}
 j\cdot b_{A_2}
 =\frac{(3-p)j_1-(p+3)j_2}{4}
 =\frac{3-p}{4}(j_1-j_2)-\frac p2j_2>0
 \qquad (p\le-1).
\end{equation}
Thus the three positivity conditions give precisely the common
ghost-free domain of all $21$ no-drone classes,
\begin{equation}
 2j_1j_2=1,\qquad j_1>j_2>0
 \qquad\Longleftrightarrow\qquad
 j=\frac1{\sqrt2}\left(e^\phi,e^{-\phi}\right),\quad\phi>0.
\end{equation}

\subsection{Enumeration and equivalence of spectra}

For each pair \((p,k)\), the scan solves the complete system of nonnegative-integer multiplicity equations, including \eqref{eq:a2u1-Nq}, \eqref{SASC}, \eqref{cc}, and \eqref{eq:mixed}. For \(k=1,2,3\), it additionally imposes the primitive charge condition \eqref{eq:primitive}, while \(k=0\) is retained only as a formal neutral sector.

The enumeration is then quotiented by the \(\mathbb Z_2\)
action that reverses the ordinary \(U(1)\) generator. In the
signed-charge notation this simply acts on the complex-representation
multiplicities as
\begin{equation}
m_{3,q}\longleftrightarrow m_{3,-q},
\qquad
m_{6,q}\longleftrightarrow m_{6,-q},
\qquad q=1,2,
\label{eq:a2u1-Z2}
\end{equation}
while the charge-paired \(\mathbf1\) and \(\mathbf8\) blocks are
unchanged. Equivalently, it exchanges the two possible orientations of each complex charged hypermultiplet.
Spectra related by this global transformation are counted only once~\footnote{Note that one first solves the anomaly equations with both orientations distinct, and quotients by global charge reversal only afterward.}.
The counts in Table~\ref{tab: A2U1} are
the numbers of \(\mathbb Z_2\)-orbits, not the numbers of oriented
solutions before this quotient.

The numbers of inequivalent spectra in the four formal \(k\)-sectors are
\begin{center}
\begin{tabular}{c|c|c|c|c}
\toprule
\(k\) & \(\bU\) & \((N_1,N_2)\) & inequivalent spectra & status\\
\midrule
0 & \((0,0)\) & \((0,0)\) & 68 & drone\\
1 & \((4,4)\) & \((96,0)\) & 813 & retained\\
2 & \((8,8)\) & \((128,16)\) & 22,408 & retained\\
3 & \((12,12)\) & \((96,48)\) & 155,317 & retained\\
\midrule
\multicolumn{3}{r}{No-drone total} & 178,538 & \\
\bottomrule
\end{tabular}
\captionof{table}{Numbers of inequivalent spectra in each
charge-dimension sector.  The \(k=0\) sector has no charged matter and
is excluded from the no-drone result.}
\label{tab: A2U1}
\end{center}

Thus, within the representation and charge domain
\eqref{eq:a2u1-menu}, there are \(178{,}538\) inequivalent primitive
no-drone spectra in \(21\) anomaly-vector classes.  Including the formal
neutral-\(U(1)\) sector gives \(178{,}606\) spectra in \(28\) formal
classes.  All retained classes satisfy local Green--Schwarz
factorization, the cubic \(A_2^3\)-\(U(1)\) condition, primitive charge
normalization, unimodular embedding in odd lattice
\(I_{1,1}\), and positivity of the gauge kinetic terms.

\section{Conclusions}
\label{sec:conclusions}

In this paper we have investigated anomaly constraints in six-dimensional
\(N=(1,0)\) supergravity with gauged \(Sp(1)_R\) symmetry and one tensor
multiplet.  The Green--Schwarz factorization conditions, together with the
requirement that the anomaly coefficients embed in a unimodular string charge
lattice, place strong arithmetic restrictions on the theory.  Remarkably, the unimodularity condition applied to the Gram matrix of \(a\) and \(b_R\) fixes the number of vector multiplets to one of only two positive possibilities, $n_V=12$ and $n_V=96$ \cite{Becker:2025xgy}. This already shows that the \(Sp(1)_R\)-gauged problem is much more rigid than the corresponding \(U(1)_R\)-gauged classification, and motivates an extensive search for consistent $Sp(1)_R$ gauged models.

In the \(n_V=96\) branch, and under the principal gauge-factor restriction that the non-\(R\) gauge algebra contains no \(U(1)\), \(A_1\) or \(A_2\) simple factors, we  carried out a complete scan over the resulting finite set of reductive gauge  sectors. The inclusion of $C_2$ already substantially increased the size of the search space, and the omission of $U(1), A_1$ and $A_2$ is due to the fact that their inclusion leads to an enormous enlargement of the search space. We find $4{,}094$ formal anomaly- and lattice-consistent spectra, but every one of them contains at least one drone. They are therefore excluded as physical models because the corresponding abelian gauge-kinetic matrix is degenerate \cite{Park:2011wv}, as reviewed in section~\ref{sec: consistency conditions}. 

In the \(n_V=12\) branch, if the gauge group is taken to be semisimple, then the non-\(R\)
gauge algebra is forced to be \(A_1^3\). In this case, and in the doublet/adjoint truncation 
described in the text, we find $4{,}337{,}331$ consistent models, grouped into $102$ classes. 
If we allow $U(1)$ factors, there are four possible gauge groups, of which the simplest one is $A_2\times U(1)\times Sp(1)_R$. In this case, we allow hyperfermions to be charged under $U(1)$, and for the representation set and charge range specified in the text, we found \(178{,}538\) solutions grouped into $21$ classes. 
 
Contrasting our results with those for the $U(1)_R$-gauged models, we note that searches performed under substantial restrictions on the group factors and representation menus have yielded remarkably few solutions, namely 33 in total~\cite{Randjbar-Daemi:1985tdc,Avramis:2005qt,Avramis:2005hc,Becker:2023zyb,Becker:2025xgy,Becker:2026etl}. However, a comprehensive search for \(U(1)_R\)-gauged models analogous to the enumeration of ungauged models in \cite{Hamada:2024oap} has not yet been carried out. 

The consistent solution space has been found to be enlarged considerably also in the case of diagonally gauged $U(1)_{R+}$  models \cite{Suzuki:2005vu,Guo:2026dvy} where a diagonal subgroup of $U(1)_R \subset Sp(1)_R$  and $U(1)\subset Sp(n_H)$ is gauged. It would be interesting to classify consistent models with diagonal $Sp(1)_{R+}$ gauging as well. 
Another extension of our work would be the study of the global form of the gauge group in all the consistent models found so far.

In determining the exact solutions of the anomaly-free minimal 6D supergravities, one typically considers the classical action, thereby neglecting the Green-Schwarz anomaly canceling counterterms. In that case, the $R$-symmetry ungauged supergravities admit $\mathrm{Minkowski}_6$ solutions, but the standard $U(1)_R$- or $Sp(1)_R$-gauged supergravities do not admit $\mathrm{Minkowski}_6$ or $\mathrm{(\mathrm{A})\mathrm{dS}}_6$ vacua. This is not pertinent to the question of local and global anomalies we have considered, but it may have consequences in a detailed study of anomaly inflows. A class of diagonally $U(1)_{R+}$-gauged models has been shown to admit not only $\mathrm{dS}_6$ solutions, but also $\mathrm{Minkowski}_6$ solutions \cite{Guo:2026dvy}. In the latter case, a subset of hyperscalars must be given specific nonvanishing expectation values. We expect this to continue to be the case in diagonally $Sp(1)_{R+}$ gauged models as well.

A central point of this paper is the considerable enlargement of the landscape of local and global anomaly-free $R$-symmetry gauged supergravity theories in minimal $6D$ supergravities. It is important to determine whether the
models found here satisfy additional consistency conditions suggested by anomaly inflow, and various Swampland criteria. The remarkable growth in the 
size of the solution space for consistent $R$-symmetry gauged models further motivates investigation of whether such models can be embedded in string/M/F theory.  

\subsection*{Data and code availability}

The representation data, Mathematica codes used to derive the single-factor anomaly-free spectra (with drones) in the $n_V=96$ branch, to impose the spectra-assembly method, and to reproduce the class-by-class enumeration of the \(A_1^3\times Sp(1)_R\), $A_2\times U(1)$ spectra, along with a PDF of the full $n_V=96$ branch catalogue are available as ancillary materials\cite{github} with the submission.

\subsubsection*{Acknowledgments}

We thank Yuji Tachikawa for helpful discussions. We acknowledge the use of ChatGPT in exploratory, computational, and editorial stages. 
This work is supported in part by the NSF grant PHY-2413006 and by endowment funds from the Mitchell Family Foundation.

\appendix

\section{Hypermultiplet multiplicity convention}
\label{app:hyper-multiplicity-conventions}

Throughout this paper we use a single physical convention.  For a
representation block
\begin{equation}
 {\cal R}_q=(R_1,\ldots,R_n)_q,
 \qquad
 d_{\cal R}:=\prod_{i=1}^n\dim R_i,
\end{equation}
we denote by \(m_{\cal R}\) its multiplicity measured in
\emph{full-hypermultiplet units}.  The relevant reality type is that of the
full product block \({\cal R}_q\), including any abelian charge.  The physical
reality rule is
\begin{equation}
\begin{array}{c|c}
\text{reality type of }{\cal R}_q & m_{\cal R} \\
\hline
\text{complex or real} & \mathbb Z_{\ge0} \\[1mm]
\text{pseudoreal} & \frac12\mathbb Z_{\ge0}
\end{array}
\label{tab:app-mult-dictionary}
\end{equation}
Thus \(m_{\cal R}=1\) denotes one full hypermultiplet in every case, while
\(m_{\cal R}=\tfrac12\) denotes one half-hypermultiplet and is allowed only
for a pseudoreal block.

For a complex block, one full hypermultiplet has complexified
representation
\begin{equation}
 {\cal R}_q\oplus\overline{\cal R}_{-q}.
\end{equation}
These two summands are the symplectic completion of a \emph{single} full
hypermultiplet.  Consequently the contribution of a block to the
hypermultiplet count is always
\begin{equation}
 \Delta n_H=m_{\cal R}\,d_{\cal R},
 \label{eq:app-grav-count}
\end{equation}
and the same physical multiplicity multiplies its contributions to the gauge
and mixed-anomaly traces.  In particular, there is no extra factor of two in
\eqref{eq:app-grav-count} for a complex representation.

A nonzero abelian charge makes the full block complex, so an isolated
half-hypermultiplet is not allowed for \(q\ne0\).  At \(q=0\), a
self-conjugate product block is real or pseudoreal according to its invariant
bilinear form.  For a tensor product of self-conjugate irreps, an odd number
of pseudoreal factors gives a pseudoreal product, whereas an even number gives
a real product.

For exact computer enumeration it is convenient, but purely auxiliary, to
integerize the physical multiplicity by defining
\begin{equation}
 x_{\cal R}:=2m_{\cal R}\in\mathbb Z_{\ge0}.
 \label{eq:app-doubled-variable}
\end{equation}
Then \(x_{\cal R}\) is even for complex or real blocks and may be odd for
pseudoreal blocks.  We refer to \(x_{\cal R}\) only as the
\emph{doubled} or \emph{integerized} multiplicity variable; it is not the
multiplicity displayed in a physical spectrum.  All spectra and tables in
this paper quote \(m_{\cal R}\) in full-hypermultiplet units.

\begingroup

\section{A concrete illustration of the assembly procedure}
\label{subsec:concrete-assembly-example}

We now illustrate the spectra--assembly method with an explicit example from the
catalogues.  The example is useful because the drone numbers can be followed at every step and
because the same lower-factor data lead first to a nontrivial two-factor solution and then to a
simple obstruction at three factors.  Recall that in the $n_V=96$ branch the number of drones is
\be
 M_{\rm drone}=93-\sum_i \dim G_i
  ,
\ee
since the gauged $Sp(1)_R$ contributes three vector multiplets.

We first choose a spectrum in the $A_4$ one-factor catalogue,
\begin{equation}
 \mathcal S_{A_4}
 =240\,\mathbf 1+16\,\mathbf 5+2\,\mathbf{10},
 \qquad
 b_{A_4}=(1,0),
 \qquad
 M_{A_4}=69,
 \label{eq:assembly-A4-shadow}
\end{equation}
and a spectrum in the $C_5$ catalogue,
\begin{equation}
 \mathcal S_{C_5}
 =64\,\mathbf 1+10\,\mathbf{10}+4\,\mathbf{44},
 \qquad
 b_{C_5}=\left(\frac32,1\right),
 \qquad
 M_{C_5}=38.
 \label{eq:assembly-C5-shadow}
\end{equation}
All displayed coefficients in this example use the full-hypermultiplet
convention.  In particular, the pseudoreal fundamental of $C_5$ appears
with physical multiplicity $10$ in \eqref{eq:assembly-C5-shadow}; this is
equivalent to twenty minimal half-hypermultiplets, but the latter number is
not used as a spectrum coefficient.

For $A_4\times C_5$ the vector budget fixes
\[
 M_{A_4C_5}=93-\dim A_4-\dim C_5=93-24-55=14.
\]
This is exactly compatible with the two chosen single-factor spectra: deleting $C_5$ replaces its $55$ vector
multiplets by drones and gives $14+55=69$ drones, while deleting $A_4$ gives $14+24=38$ drones.
Thus the projection lands on \eqref{eq:assembly-A4-shadow} and
\eqref{eq:assembly-C5-shadow}, respectively.

Let $m_{R,S}$ denote the physical multiplicity of a hypermultiplet in $(R,S)$.  The single-factor spectra multiplicities
already eliminate almost all possible links.  In particular,
\[
 m_{\mathbf5,\mathbf{44}}=m_{\mathbf{10},\mathbf{10}}
 =m_{\mathbf{10},\mathbf{44}}=0,
\]
because deleting the second factor would otherwise produce more than the available $16$ copies
of $\mathbf5$ or $2$ copies of $\mathbf{10}$ in the $A_4$ projection.  Hence the only possible
nontrivial link is
\[
 m\equiv m_{\mathbf5,\mathbf{10}}\in\{0,1\}.
\]
The mixed anomaly equation fixes it.  Since $\lambda_{A_4}=\lambda_{C_5}=1$ and the two
fundamentals have quadratic index one,
\begin{equation}
 b_{A_4}\!\cdot b_{C_5}
 =(1,0)\!\cdot\left(\frac32,1\right)
 =1
 =m,
 \label{eq:A4C5-mixed-example}
\end{equation}
so $m=1$.  The unlinked multiplicities are then fixed by the two single-factor spectra.
For the $A_4$ spectrum, the link $(\mathbf5,\mathbf{10})$ contributes
$\dim\mathbf{10}=10$ copies of $\mathbf5$, whereas for the $C_5$ spectrum it
contributes $\dim\mathbf5=5$ full-hypermultiplet units in the pseudoreal
$\mathbf{10}$.  Hence
\[
 16-10=6,\qquad
 2-0=2,\qquad
 10-5=5,\qquad
 4-0=4.
\]
Finally, the singlet multiplicity follows, for example, from the $A_4$ spectrum.
Each full-hypermultiplet unit in $(\mathbf1,\mathbf{10})$ contributes
$\dim\mathbf{10}=10$ neutral hypermultiplets after deleting $C_5$, so
\[
 s=240-5\times10-4\times44=14.
\]
The other single-factor spectrum gives the independent check
\[
 14+6\times5+2\times10=64.
\]
Therefore the two selected one-factor spectra assemble uniquely to
\begin{equation}
 \begin{aligned}
 \mathcal S_{A_4C_5}={}&14(\mathbf1,\mathbf1)
 +(\mathbf5,\mathbf{10})
 +6(\mathbf5,\mathbf1)
 +5(\mathbf1,\mathbf{10})\\
 &\quad+4(\mathbf1,\mathbf{44})
 +2(\mathbf{10},\mathbf1),
 \end{aligned}
 \label{eq:A4C5-assembled-spectrum}
\end{equation}
with
\[
 (b_{A_4},b_{C_5})=\left((1,0),\left(\frac32,1\right)\right),
 \qquad M_{\rm drone}=14.
\]

To see how the same construction works at the next step, choose the lattice-admissible $G_2$
one-factor spectrum,
\begin{equation}
 \mathcal S_{G_2}=96\,\mathbf1+31\,\mathbf7+\mathbf{27},
 \qquad
 b_{G_2}=(2,4),
 \qquad
 M_{G_2}=79.
 \label{eq:assembly-G2-shadow}
\end{equation}
As a useful intermediate check, assembling \eqref{eq:assembly-A4-shadow} with
\eqref{eq:assembly-G2-shadow} gives $M_{A_4G_2}=55$.  The only possible charged link is
$(\mathbf5,\mathbf7)$.  Since $\lambda_{A_4}=1$, $\lambda_{G_2}=2$ and both fundamentals have
quadratic index one, the mixed equation gives
\[
 b_{A_4}\!\cdot b_{G_2}=4=2m_{\mathbf5,\mathbf7},
 \qquad\text{hence}\qquad m_{\mathbf5,\mathbf7}=2.
\]
Then we have
\[
 \mathcal S_{A_4G_2}
 =66(\mathbf1,\mathbf1)+2(\mathbf5,\mathbf7)+2(\mathbf5,\mathbf1)
 +21(\mathbf1,\mathbf7)+(\mathbf1,\mathbf{27})+2(\mathbf{10},\mathbf1).
\]

Now consider a putative $A_4\times C_5\times G_2$ model extending
\eqref{eq:A4C5-assembled-spectrum} and \eqref{eq:assembly-G2-shadow}.  There are no drones,
\[
 M_{A_4C_5G_2}=93-24-55-14=0.
\]
Its three two-factor projections would have to be
\[
 \begin{array}{ccl}
 \text{delete }G_2: & A_4\times C_5, & M=14,\\
 \text{delete }C_5: & A_4\times G_2, & M=55,\\
 \text{delete }A_4: & C_5\times G_2, & M=24.
 \end{array}
\]
The first two are precisely the spectra constructed above.  The obstruction occurs in the third
one.  In the full-hypermultiplet convention the charged content of the above $C_5$ spectrum is
$10\,\mathbf{10}+4\,\mathbf{44}$, while the chosen $G_2$ spectrum has
$31\,\mathbf7+\mathbf{27}$.  A $\mathbf{44}$ of $C_5$ cannot link to a nontrivial $G_2$
representation because the smallest such representation has dimension $7>4$.  Likewise a
$(\mathbf{10},\mathbf{27})$ link is excluded already by the projected spectrum multiplicity.  Thus the only
possible mixed link is $m(\mathbf{10},\mathbf7)$.  Since
$(\mathbf{10},\mathbf7)$ is pseudoreal, the physical multiplicity may be
half-integral, $m\in\frac12\mathbb Z_{\ge0}$.  Deleting $G_2$ multiplies
this full-hypermultiplet multiplicity by seven, so the $C_5$ spectrum requires
\begin{equation}
 7m\le 10.
 \label{eq:C5G2-shadow-bound}
\end{equation}
The mixed anomaly equation is already written in physical full-hyper units:
\begin{equation}
 \begin{aligned}
 b_{C_5}\!\cdot b_{G_2}
 &=\left(\frac32,1\right)\!\cdot(2,4)
 =8\\
 &=\lambda_{C_5}\lambda_{G_2}
 m A_{\mathbf{10}}A_{\mathbf7}
 =2m,
 \end{aligned}
 \label{eq:C5G2-mixed-obstruction}
\end{equation}
so $m=4$, in direct contradiction with \eqref{eq:C5G2-shadow-bound}, which gives
$m\le 10/7$.

Consequently there is no compatible $C_5\times G_2$ spectrum for these one-factor data, and hence
there is no anomaly-free assembly of the form $A_4\times C_5\times G_2$, despite the fact that spectra of $A_4\times C_5$ and $A_4\times G_2$ exist.
\endgroup

\section{The $A_1^3 \times Sp(1)_R$ models}
\label{sec:tabulation of a1^3 models}

The complete list of the nine \(A_1^3\times Sp(1)_R\) spectra containing only singlets and doublets is tabulated below. All coefficients in the table are quoted in full-hypermultiplet units.  Hence a coefficient $1/2$ denotes one allowed half-hypermultiplet in a pseudoreal product representation. 

\begingroup
\scriptsize
\setlength{\LTleft}{0pt}
\setlength{\LTright}{0pt}
\setlength{\tabcolsep}{2pt}
\renewcommand{\arraystretch}{1.16}
\begin{longtable}{@{}>{\centering\arraybackslash}p{0\textwidth}
>{\centering\arraybackslash}p{0.225\textwidth}
>{\RaggedRight\arraybackslash}p{0.76\textwidth}@{}}
\toprule
No. & $(b_1,b_2,b_3)$ & $\rho^H$\\
\midrule
\endfirsthead
\toprule
No. & $(b_1,b_2,b_3)$ & $\rho^H$\\
\midrule
\endhead
\bottomrule
\endfoot
\bottomrule\\
\caption{\small Complete list of the nine
\(A_1^3\times Sp(1)_R\) spectra containing only singlets and doublets.}
\label{tab:A1-singlet-doublet-complete}\\
\endlastfoot
1 & $\bigl((0,1),(0,1),(0,1)\bigr)$ & $\begin{array}[t]{@{}l@{}} 160\,(\mathbf{1},\mathbf{1},\mathbf{1}) + 16\,(\mathbf{1},\mathbf{1},\mathbf{2}) + 16\,(\mathbf{1},\mathbf{2},\mathbf{1}) + 16\,(\mathbf{2},\mathbf{1},\mathbf{1}) \end{array}$ \\
2 & $\bigl((0,1),(0,1),(1,2)\bigr)$ & $\begin{array}[t]{@{}l@{}} 120\,(\mathbf{1},\mathbf{1},\mathbf{1}) + 36\,(\mathbf{1},\mathbf{1},\mathbf{2}) + 14\,(\mathbf{1},\mathbf{2},\mathbf{1}) + (\mathbf{1},\mathbf{2},\mathbf{2}) + 14\,(\mathbf{2},\mathbf{1},\mathbf{1}) + (\mathbf{2},\mathbf{1},\mathbf{2}) \end{array}$ \\
3 & $\bigl((0,1),(1,2),(1,2)\bigr)$ & $\begin{array}[t]{@{}l@{}} 84\,(\mathbf{1},\mathbf{1},\mathbf{1}) + 32\,(\mathbf{1},\mathbf{1},\mathbf{2}) + 32\,(\mathbf{1},\mathbf{2},\mathbf{1}) + 3\,(\mathbf{1},\mathbf{2},\mathbf{2}) + 14\,(\mathbf{2},\mathbf{1},\mathbf{1}) + \frac{1}{2}\,(\mathbf{2},\mathbf{2},\mathbf{2}) \end{array}$ \\
4 & $\bigl((0,1),(1,2),(1,2)\bigr)$ & $\begin{array}[t]{@{}l@{}} 88\,(\mathbf{1},\mathbf{1},\mathbf{1}) + 30\,(\mathbf{1},\mathbf{1},\mathbf{2}) + 30\,(\mathbf{1},\mathbf{2},\mathbf{1}) + 4\,(\mathbf{1},\mathbf{2},\mathbf{2}) + 12\,(\mathbf{2},\mathbf{1},\mathbf{1}) + (\mathbf{2},\mathbf{1},\mathbf{2}) + (\mathbf{2},\mathbf{2},\mathbf{1}) \end{array}$ \\
5 & $\bigl((1,2),(1,2),(1,2)\bigr)$ & $\begin{array}[t]{@{}l@{}} 48\,(\mathbf{1},\mathbf{1},\mathbf{1}) + 32\,(\mathbf{1},\mathbf{1},\mathbf{2}) + 32\,(\mathbf{1},\mathbf{2},\mathbf{1}) + 32\,(\mathbf{2},\mathbf{1},\mathbf{1}) + 2\,(\mathbf{2},\mathbf{2},\mathbf{2}) \end{array}$ \\
6 & $\bigl((1,2),(1,2),(1,2)\bigr)$ & $\begin{array}[t]{@{}l@{}} 52\,(\mathbf{1},\mathbf{1},\mathbf{1}) + 30\,(\mathbf{1},\mathbf{1},\mathbf{2}) + 30\,(\mathbf{1},\mathbf{2},\mathbf{1}) + (\mathbf{1},\mathbf{2},\mathbf{2}) + 30\,(\mathbf{2},\mathbf{1},\mathbf{1}) + \,(\mathbf{2},\mathbf{1},\mathbf{2}) + (\mathbf{2},\mathbf{2},\mathbf{1}) + \frac{3}{2}\,(\mathbf{2},\mathbf{2},\mathbf{2}) \end{array}$ \\
7 & $\bigl((1,2),(1,2),(1,2)\bigr)$ & $\begin{array}[t]{@{}l@{}} 56\,(\mathbf{1},\mathbf{1},\mathbf{1}) + 28\,(\mathbf{1},\mathbf{1},\mathbf{2}) + 28\,(\mathbf{1},\mathbf{2},\mathbf{1}) + 2\,(\mathbf{1},\mathbf{2},\mathbf{2}) + 28\,(\mathbf{2},\mathbf{1},\mathbf{1}) + 2\,(\mathbf{2},\mathbf{1},\mathbf{2}) + 2\,(\mathbf{2},\mathbf{2},\mathbf{1}) + (\mathbf{2},\mathbf{2},\mathbf{2}) \end{array}$ \\
8 & $\bigl((1,2),(1,2),(1,2)\bigr)$ & $\begin{array}[t]{@{}l@{}} 60\,(\mathbf{1},\mathbf{1},\mathbf{1}) + 26\,(\mathbf{1},\mathbf{1},\mathbf{2}) + 26\,(\mathbf{1},\mathbf{2},\mathbf{1}) + 3\,(\mathbf{1},\mathbf{2},\mathbf{2}) + 26\,(\mathbf{2},\mathbf{1},\mathbf{1}) + 3\,(\mathbf{2},\mathbf{1},\mathbf{2}) + 3\,(\mathbf{2},\mathbf{2},\mathbf{1}) + \frac{1}{2}\,(\mathbf{2},\mathbf{2},\mathbf{2}) \end{array}$ \\
9 & $\bigl((1,2),(1,2),(1,2)\bigr)$ & $\begin{array}[t]{@{}l@{}} 64\,(\mathbf{1},\mathbf{1},\mathbf{1}) + 24\,(\mathbf{1},\mathbf{1},\mathbf{2}) + 24\,(\mathbf{1},\mathbf{2},\mathbf{1}) + 4\,(\mathbf{1},\mathbf{2},\mathbf{2}) + 24\,(\mathbf{2},\mathbf{1},\mathbf{1}) + 4\,(\mathbf{2},\mathbf{1},\mathbf{2}) + 4\,(\mathbf{2},\mathbf{2},\mathbf{1}) \end{array}$ \\
\end{longtable}
\endgroup

We allow unrestricted multiplicities of the permitted blocks, requiring only that an individual product representation contain a triplet under at most one $A_1$ factor. This gives rise to a dramatically larger number of solutions, namely a total of $4{,}337{,}331$ models grouped into $102$ anomaly-vector classes.
For each nonempty class we select one canonical representative by minimizing,
in order,
\begin{equation}
 \left(s,\ \#\{\mu:m_\mu>0\},\ (m_1,\ldots,m_{19})\right),
 \label{eq:a1-representative-order}
\end{equation}
where the final comparison is lexicographic in the fixed representation order
$(d_1,d_2,d_3)\in\{1,2,3\}^3$, with $(1,1,1)$ removed and at most one entry
equal to $3$. With this selection rule, representative spectra for each one of the $102$ anomaly-vector classes are tabulated below.

\begingroup
\small
\setlength{\tabcolsep}{3pt}
\renewcommand{\arraystretch}{1.04}
\setlength{\LTleft}{0pt}
\setlength{\LTright}{0pt}
\begin{longtable}{@{}r c r r >{\raggedright\arraybackslash}p{0.64\textwidth}@{}}
\toprule
No. & class & $N$ & $s_{\min}$ & canonical representative \\
\midrule
\endfirsthead
\toprule
No. & class & $N$ & $s_{\min}$ & canonical representative \\
\midrule
\endhead
\midrule
\multicolumn{5}{r}{\emph{Continued on the next page}}\\
\endfoot
\bottomrule\\
\caption{Canonical representative spectra for the 102 nonempty
$A_1^3\times Sp(1)_R$ anomaly-vector classes under the full SMW rule.
Here $N$ is the number of inequivalent spectra in the indicated class,
after quotienting by the stabilizer of the class.
Of the 102 representatives, 65 have $s_{\min}=0$.
Light gray shading in the first two columns marks the four representatives
whose charged matter contains only doublets and no triplet block.
The coefficients in the last column use the full-hypermultiplet convention of appendix~\ref{app:hyper-multiplicity-conventions}; half-integer coefficients occur only for pseudoreal product representations.}
\label{tab:a1-full-smw-representatives}\\
\endlastfoot
\cellcolor{gray!20}1 & \cellcolor{gray!20}$(0,0,0)$ & 1 & 160 & $16\,(\mathbf{1},\mathbf{1},\mathbf{2}) \allowbreak{}+{}\allowbreak 16\,(\mathbf{1},\mathbf{2},\mathbf{1}) \allowbreak{}+{}\allowbreak 16\,(\mathbf{2},\mathbf{1},\mathbf{1}) \allowbreak{}+{}\allowbreak 160\,(\mathbf{1},\mathbf{1},\mathbf{1})$ \\
\cellcolor{gray!20}2 & \cellcolor{gray!20}$(0,0,1)$ & 1 & 120 & $36\,(\mathbf{1},\mathbf{1},\mathbf{2}) \allowbreak{}+{}\allowbreak 14\,(\mathbf{1},\mathbf{2},\mathbf{1}) \allowbreak{}+{}\allowbreak (\mathbf{1},\mathbf{2},\mathbf{2}) \allowbreak{}+{}\allowbreak 14\,(\mathbf{2},\mathbf{1},\mathbf{1}) \allowbreak{}+{}\allowbreak (\mathbf{2},\mathbf{1},\mathbf{2}) \allowbreak{}+{}\allowbreak 120\,(\mathbf{1},\mathbf{1},\mathbf{1})$ \\
3 & $(0,0,2)$ & 3 & 80 & $56\,(\mathbf{1},\mathbf{1},\mathbf{2}) \allowbreak{}+{}\allowbreak \frac{29}{2}\,(\mathbf{1},\mathbf{2},\mathbf{1}) \allowbreak{}+{}\allowbreak \frac{1}{2}\,(\mathbf{1},\mathbf{2},\mathbf{3}) \allowbreak{}+{}\allowbreak \frac{29}{2}\,(\mathbf{2},\mathbf{1},\mathbf{1}) \allowbreak{}+{}\allowbreak \frac{1}{2}\,(\mathbf{2},\mathbf{1},\mathbf{3}) \allowbreak{}+{}\allowbreak 80\,(\mathbf{1},\mathbf{1},\mathbf{1})$ \\
4 & $(0,0,3)$ & 3 & 60 & $60\,(\mathbf{1},\mathbf{1},\mathbf{2}) \allowbreak{}+{}\allowbreak 4\,(\mathbf{1},\mathbf{1},\mathbf{3}) \allowbreak{}+{}\allowbreak \frac{25}{2}\,(\mathbf{1},\mathbf{2},\mathbf{1}) \allowbreak{}+{}\allowbreak (\mathbf{1},\mathbf{2},\mathbf{2}) \allowbreak{}+{}\allowbreak \frac{1}{2}\,(\mathbf{1},\mathbf{2},\mathbf{3}) \allowbreak{}+{}\allowbreak \frac{25}{2}\,(\mathbf{2},\mathbf{1},\mathbf{1}) \allowbreak{}+{}\allowbreak (\mathbf{2},\mathbf{1},\mathbf{2}) \allowbreak{}+{}\allowbreak \frac{1}{2}\,(\mathbf{2},\mathbf{1},\mathbf{3}) \allowbreak{}+{}\allowbreak 60\,(\mathbf{1},\mathbf{1},\mathbf{1})$ \\
\end{longtable}
\endgroup

\section{The $A_2 \times U(1) \times Sp(1)_R$ models}
\label{sec:tabulation of a2xu(1) models}

We use
\(\eta=%
\) and \eq{conv2}.  A full
hypermultiplet in a complex representation \(R=\mathbf3,\mathbf6\) is
labelled simply by \(R_q\), with signed \(q\), as in
\eqref{eq:a2u1-signed-full-hyper}; its symplectic completion
\(R_q\oplus\overline R_{-q}\) is implicit.  Thus \(R_{-q}\) denotes the
opposite orientation, and one \(R_q\) contributes \(\dim R\) to \(n_H\).
The mixed cubic equation \(F\,\tr F_{A_2}^3=0\) is imposed explicitly.
The number of models in each class is quoted after the
\(\mathbb Z_2\) quotient \eqref{eq:a2u1-Z2}.  Within each class, the
displayed representative first minimizes the number of neutral
singlets, then the number of distinct occupied blocks.


\begingroup
\fontsize{9}{10}\selectfont
\setlength{\tabcolsep}{1.8pt}
\renewcommand{\arraystretch}{1.12}
\begin{longtable}{
@{}c c c c
>{\raggedright\arraybackslash}p{0.60\textwidth}
r@{}}
\label{tab:A2U1-corrected}\\
\toprule
\(k\)&Class&\(b_{A_2}\)&\(b_{U(1)}\)&Representative spectrum&
\# models\\
\midrule
\endfirsthead
\multicolumn{6}{c}{\tablename\ \thetable\ continued}\\
\toprule
\(k\)&Class&\(b_{A_2}\)&\(b_{U(1)}\)&Representative spectrum&
\# models\\
\midrule
\endhead
\midrule
\multicolumn{6}{r}{Continued on the next page}\\
\endfoot
\bottomrule\\
\caption{\small Representative classes of anomaly-free \(A_2\times U(1)\times Sp(1)_R\) models with
\(|q|\leq2\) and \(A_2\) representations restricted to \(\mathbf1,\mathbf3,\overline{\mathbf3},
\mathbf6,\overline{\mathbf6},\mathbf8\). There are $178{,}538$ inequivalent spectra 
in 21 anomaly-vector classes.}
\endlastfoot
1 & 8 & $\left(\frac{11}{2},7\right)$ & $(4,4)$ & \(36\,\mathbf 1_{1} + 5\,\mathbf6_{1} + 5\,\mathbf6_{-1} + 16\,\mathbf3_{0} + 12\,\mathbf6_{0} + 5\,\mathbf 8_{0}\) & 231 \\
1 & 9 & $\left(5,\frac{13}{2}\right)$ & $(4,4)$ & \(36\,\mathbf 1_{1} + \mathbf6_{1} + \mathbf6_{-1} + 6\,\mathbf 8_{1} + 20\,\mathbf3_{0} + 6\,\mathbf6_{0} + 8\,\mathbf 8_{0}\) & 419 \\
1 & 10 & $\left(\frac{7}{2},5\right)$ & $(4,4)$ & \(24\,\mathbf 1_{1} + 8\,\mathbf3_{1} + 8\,\mathbf3_{-1} + 3\,\mathbf 8_{1} + \mathbf 1_{0} + 39\,\mathbf3_{0} + 7\,\mathbf6_{0}\) & 96 \\
1 & 11 & $\left(3,\frac{9}{2}\right)$ & $(4,4)$ & \(34\,\mathbf 1_{1} + 14\,\mathbf3_{-1} + 2\,\mathbf6_{1} + \mathbf 8_{1} + 44\,\mathbf3_{0} + 2\,\mathbf6_{0} + 2\,\mathbf 8_{0}\) & 61 \\
1 & 12 & $\left(\frac{3}{2},3\right)$ & $(4,4)$ & \(51\,\mathbf 1_{1} + 3\,\mathbf3_{1} + 10\,\mathbf3_{-1} + \mathbf6_{1} + 34\,\mathbf 1_{0} + 42\,\mathbf3_{0}\) & 4 \\
1 & 13 & $\left(1,\frac{5}{2}\right)$ & $(4,4)$ & \(54\,\mathbf 1_{1} + 7\,\mathbf3_{1} + 7\,\mathbf3_{-1} + 58\,\mathbf 1_{0} + 34\,\mathbf3_{0}\) & 1 \\
1 & 14 & $\left(-\frac{1}{2},1\right)$ & $(4,4)$ & \(90\,\mathbf 1_{1} + \mathbf3_{1} + \mathbf3_{-1} + 130\,\mathbf 1_{0} + 10\,\mathbf3_{0}\) & 1 \\
2 & 15 & $\left(\frac{11}{2},7\right)$ & $(8,8)$ & \(2\,\mathbf 8_{2} + 42\,\mathbf 1_{1} + 5\,\mathbf3_{1} + 5\,\mathbf3_{-1} + 7\,\mathbf 8_{1} + 16\,\mathbf6_{0} + 2\,\mathbf 8_{0}\) & 4,796 \\
2 & 16 & $\left(5,\frac{13}{2}\right)$ & $(8,8)$ & \(2\,\mathbf 8_{2} + 36\,\mathbf 1_{1} + 10\,\mathbf3_{1} + 10\,\mathbf3_{-1} + 4\,\mathbf 8_{1} + 8\,\mathbf6_{0} + 8\,\mathbf 8_{0}\) & 13,471 \\
2 & 17 & $\left(\frac{7}{2},5\right)$ & $(8,8)$ & \(\mathbf 1_{2} + 3\,\mathbf3_{-2} + \mathbf6_{-2} + 29\,\mathbf 1_{1} + 22\,\mathbf3_{1} + 9\,\mathbf3_{-1} + \mathbf6_{1} + 16\,\mathbf3_{0} + 8\,\mathbf 8_{0}\) & 2,883 \\
2 & 18 & $\left(3,\frac{9}{2}\right)$ & $(8,8)$ & \(8\,\mathbf 1_{2} + \mathbf 8_{2} + 29\,\mathbf 1_{1} + 12\,\mathbf3_{1} + 19\,\mathbf3_{-1} + \mathbf6_{1} + 24\,\mathbf3_{0} + 5\,\mathbf 8_{0}\) & 1,198 \\
2 & 19 & $\left(\frac{3}{2},3\right)$ & $(8,8)$ & \(4\,\mathbf 1_{2} + 4\,\mathbf3_{-2} + 77\,\mathbf 1_{1} + 15\,\mathbf3_{1} + \mathbf6_{-1} + 4\,\mathbf 1_{0} + 36\,\mathbf3_{0}\) & 48 \\
2 & 20 & $\left(1,\frac{5}{2}\right)$ & $(8,8)$ & \(\mathbf 1_{2} + \mathbf3_{2} + 4\,\mathbf3_{-2} + 104\,\mathbf 1_{1} + 7\,\mathbf3_{1} + \mathbf3_{-1} + 7\,\mathbf 1_{0} + 35\,\mathbf3_{0}\) & 11 \\
2 & 21 & $\left(-\frac{1}{2},1\right)$ & $(8,8)$ & \(16\,\mathbf 1_{2} + 116\,\mathbf 1_{1} + 2\,\mathbf3_{1} + 2\,\mathbf3_{-1} + 88\,\mathbf 1_{0} + 8\,\mathbf3_{0}\) & 1 \\
3 & 22 & $\left(\frac{11}{2},7\right)$ & $(12,12)$ & \(8\,\mathbf3_{-2} + 4\,\mathbf6_{2} + 36\,\mathbf 1_{1} + 5\,\mathbf3_{1} + 3\,\mathbf3_{-1} + 6\,\mathbf6_{-1} + 12\,\mathbf6_{0} + 5\,\mathbf 8_{0}\) & 33,921 \\
3 & 23 & $\left(5,\frac{13}{2}\right)$ & $(12,12)$ & \(8\,\mathbf3_{-2} + 3\,\mathbf 8_{2} + 39\,\mathbf 1_{1} + 9\,\mathbf3_{1} + 3\,\mathbf6_{1} + 2\,\mathbf6_{-1} + 14\,\mathbf 8_{0}\) & 101,013 \\
3 & 24 & $\left(\frac{7}{2},5\right)$ & $(12,12)$ & \(16\,\mathbf3_{-2} + 27\,\mathbf 1_{1} + 5\,\mathbf3_{1} + 8\,\mathbf3_{-1} + 5\,\mathbf6_{1} + 24\,\mathbf3_{0} + 5\,\mathbf 8_{0}\) & 14,105 \\
3 & 25 & $\left(3,\frac{9}{2}\right)$ & $(12,12)$ & \(\mathbf3_{2} + 15\,\mathbf3_{-2} + 36\,\mathbf 1_{1} + 15\,\mathbf3_{1} + \mathbf3_{-1} + 2\,\mathbf6_{1} + 24\,\mathbf3_{0} + 5\,\mathbf 8_{0}\) & 6,117 \\
3 & 26 & $\left(\frac{3}{2},3\right)$ & $(12,12)$ & \(15\,\mathbf 1_{2} + 3\,\mathbf3_{2} + 8\,\mathbf3_{-2} + 66\,\mathbf 1_{1} + 10\,\mathbf3_{1} + 4\,\mathbf 1_{0} + 34\,\mathbf3_{0} + \mathbf6_{0}\) & 137 \\
3 & 27 & $\left(1,\frac{5}{2}\right)$ & $(12,12)$ & \(18\,\mathbf 1_{2} + 5\,\mathbf3_{2} + 5\,\mathbf3_{-2} + 90\,\mathbf 1_{1} + \mathbf3_{1} + \mathbf3_{-1} + 4\,\mathbf 1_{0} + 36\,\mathbf3_{0}\) & 22 \\
3 & 28 & $\left(-\frac{1}{2},1\right)$ & $(12,12)$ & \(45\,\mathbf 1_{2} + \mathbf3_{-2} + 90\,\mathbf 1_{1} + 2\,\mathbf3_{1} + 85\,\mathbf 1_{0} + 9\,\mathbf3_{0}\) & 2 \\
\end{longtable}
\endgroup

\bibliographystyle{jhep}

\bibliography{bibliography}

\end{document}